\documentclass{article}


\PassOptionsToPackage{normalem}{ulem}
\usepackage{ulem}

\usepackage{algorithm}
\usepackage{algpseudocode}
\usepackage{graphicx}
\usepackage{epsfig}
\usepackage{caption}
\usepackage{subcaption}
\usepackage{amsmath}
\usepackage{amssymb}
\usepackage{amsthm}
\usepackage{xparse}
\usepackage{color}
\usepackage{xcolor}

\newtheorem{theorem}{Theorem}
\newtheorem{lemma}{Lemma}
\newtheorem{Claim}{Claim}

\newtheorem{corollary}{Corollary}

\newtheorem{definition}{Definition}


\def\squareforqed{\hbox{\rlap{$\sqcap$}$\sqcup$}}
\def\qed{\ifmmode\squareforqed\else{\unskip\nobreak\hfil
\penalty50\hskip1em\null\nobreak\hfil\squareforqed
\parfillskip=0pt\finalhyphendemerits=0\endgraf}\fi}



\begin{document}
\title{Optimal Evacuation Flows on Dynamic Paths with General Edge Capacities}

\author{Guru Prakash Arumugam\thanks{Currently with Google Inc.  Email: {\tt guruprakash991@gmail.com. } Work done while studying at the Dept. of Computer Science and Engineering (CSE), Indian Institute of Technology Madras (IITM) and visiting Hong Kong University of Science and Technology (HKUST).} \and 
John Augustine\thanks{Dept. of Computer Science and Engineering, IIT Madras, Chennai, India. Email: {\tt augustine@iitm.ac.in.} Supported by the IITM New Faculty Seed Grant, the IITM Exploratory Research Project, and the Indo-German Max Planck Center for Computer Science (IMPECS).} 
\and Mordecai J. Golin\thanks{Dept. of Computer Science and Engineering, HKUST, Hong Kong. Email: {\tt golin@cse.ust.hk}. Research partially supported by Hong Kong RGC CERG grant 16208415.} 
\and Yuya Higashikawa\thanks{Dept. of Information and System Engineering, Chuo University and CREST, Japan Science and Technology Agency (JST), Japan. Email:{\tt higashikawa.874@g.chuo-u.ac.jp}.} 
\and Naoki Katoh\thanks{Dept. of Informatics, Kwansei Gakuin University and CREST, Japan Science and Technology Agency (JST), Japan.  Email: {\tt naoki.katoh@gmail.com}. Research partially supported by JSPS Grant-in-Aid for Scientific Research(A)(25240004).} 
\and Prashanth Srikanthan\thanks{Currently with Microsoft Inc. Email: {\tt prashanthxs@gmail.com}. Work done while studying at the Dept. of Computer Science and Engineering (CSE), Indian Institute of Technology Madras (IITM) and visiting Hong Kong University of Science and Technology (HKUST).}}

\maketitle
\begin{abstract}
A  {\em Dynamic Graph Network} is a graph in which each edge has an associated travel time and a capacity (width) that limits the number of items that can travel in parallel along that edge. 
Each vertex in this dynamic graph network begins with the number of items that must be evacuated into designated sink vertices. 
A $k$-sink evacuation protocol  finds the location of $k$ sinks and associated evacuation movement protocol that allows evacuating all the items to a sink in minimum time.  The associated evacuation movement must impose a  {\em confluent} flow, i.e, all items passing through a particular vertex exit that vertex using  the same edge. 
In this paper we address the $k$-sink evacuation problem on  a dynamic path network.  
We provide solutions that  run  in $O(n \log n)$ time for $k=1$ and $O(k n \log^2 n)$ for $k >1$  and  work for  arbitrary edge capacities.
\end{abstract}


\newpage

\section{Introduction}
Dynamic Graph Networks were introduced by Ford and Fulkerson \cite{Ford1958} in 1958.  In these graphs, in addition to a weight denoting travel time, edges also possess a {\em capacity} denoting how much flow can enter the edge in one time unit. If too much flow is already at or entering  a vertex, the flow must then wait to enter the edge, causing  {\em congestion}.    Assuming a source $s$ and a sink $t$, canonical problems addressed in Dynamic Graph Networks are the {\em Max Flow over Time (MFOT)} problem   of how much flow can be moved from $s$ to $t$ in a given time $T$ and the {\em Quickest Flow Problem (QFP)} of how quickly $W$ units of flow can be moved from $s$ to $t$.  Good surveys of the area and applications can be found in
\cite{Skutella2009,Aronson1989,Fleischer2007,Pascoal2006}.  Ford and Fulkerson \cite{Ford1958} gave a polynomial time algorithm for the MFOT problem in their original paper;  strongly polynomial algorithms for  the QFP problem,   using techniques from parametric search, were developed later \cite{Burkard1993}.  Hoppe and Tardos \cite{Hoppe2000} give a strongly polynomial solution to a generalization, the {\em Quickest Transshipment Problem}, which has multiple sources and sinks, each with specified supplies and demands.  The problem there is to move items from the sources to sinks, satisfying the demands, as quickly as possible.

Dynamic Graph Networks can also model {\em evacuations}, e.g., \cite{Baumann2006}.  For example, supplies on a vertex  could  represent the number of people in a building or room;  edges represent hallways or roads  (with their travel times and capacity constraints); and sinks represent  emergency exits. The problem is to find an  evacuation protocol (describing who travels along which edge) that minimizes the  {\em evacuation time}  required for everyone to reach a sink. A more general problem would be, given $k$, to find the locations of the $k$ sinks and the associated evacuation protocols that minimize evacuation time over all possible locations of $k$ sinks.  Note that if $k >1,$ this differs from the Quickest Transshipment Problem  in that  demands on sinks, i.e., number of people evacuating to each exit, are not specified in advance but are calculated as part of the solution.

{\em Evacuation Flows} assume the equivalent of a sign in each room specifying {\em ``This Way Out''}. This restricts the above set-up by requiring that an evacuation protocol specify a unique {\em evacuation edge} for each vertex that all flow passing through that vertex  must unsplittably follow.  Such a flow is known as {\em confluent}\footnote{As described in \cite{Dressler2010b}, confluent flows occur naturally in problems other than evacuations, e.g.,  packet forwarding and railway scheduling};  even  in the static case constructing  an optimal confluent flow is known to be very difficult.  Not only is finding  a max-flow among all confluent flows  NP hard  but, if P $\not=$ NP, then it is even impossible to construct a constant-factor approximate optimal flow in polynomial time \cite{Chen2007,Dressler2010b,Chen2006,Shepherd2015}.
Note that confluent flows have  very particular forms.   Since following the evacuation edges from any vertex leads to a sink, the set of directed evacuation edges form a set of inwardly directed trees, with the ``root'' of the tree being the sink and the trees forming  a partition of the graph vertices.  If the $k$ sinks are provided as part of the input, the problem is to find the minimum evacuation time protocol for those sinks.   If only $k$ is given,  the sink-location version of the problem is to find the set of $k$ sinks along with associated evacuation protocol, that minimizes the evacuation time.   

 If  capacities are ``large enough'', the $k$-sink location problem for evacuation flows also becomes the $k$-center problem and is thus NP-Hard \cite[ND50]{garey1979computers}.  Kamiyama {\em et al.}~\cite{Kamiyama} proves by reduction to {\em Partition} that, even after fixing  $k=1$ and the location of the sink,  finding the min-time evacuation protocol is still NP-Hard. Alvarez and Serna  \cite{2015SANP} has a variation of this reduction that shows that the $k=1$ problem remains NP hard even when  all of the edge capacities are the same (uniform) and that this reduction works both when the sink location is known or is allowed to be calculated as part of the problem.
 
 Similar to the situation in the (static) $k$-medium and $k$-center problems,  
 e.g., \cite{frederickson1991optimal,frederickson1991parametric,woeginger2000monge,kariv1979algorithmicII,kariv1979algorithmicI,tamir1996pn},
 where the problem on general graphs is known to be NP-Hard, there is a strand of the literature,   that tries to find the most efficient solution for solving the problem  in special graphs for which polynomial algorithms can be found.  

In the evacuation flow problem, recent work has shown polynomial time algorithms for special cases of paths and trees.  More specifically, Mamada {\em et al}~\cite{Mamada2006} showed that for trees, the $1$-sink location problem can be solved in $O(n \log^2 n)$ time using a technique that essentially simulates the evacuation flows. 
The same authors also showed \cite{Mamada2005} how, if the locations of the $k$ sinks were given, the best evacuation protocol could be found in $O(k n^2 \log^2 n)$ time. The problem solved there was to {\em partition} the tree into $k$ subtrees, each containing one of the specified sinks, minimizing the maximum evacuation time over all the subtrees.
Higashikawa {\em et al}~\cite{Higashikawa2014b} showed how to solve the $k$-sink location  problem on a path  in $O(kn)$ time if all of the edge capacities are identical.  That paper used specific properties of uniform capacity dynamic graphs and therefore was unable to be extended to solve the $k$-sink location problem for general capacity paths.  A main result in this paper is the  explicit formula (rather than simulation) in Section \ref{sec:general_formula} for the evacuation time of a path.  This formula  permits deriving for a path,  an $O(n \log n)$ algorithm for the $1$-sink location problem and a $O(k n \log^2 n)$ time algorithm for the $k$-sink location problem.

\medskip

{\em \small Note 1:  The Dynamic Graph literature distinguishes between two models, {\em discrete} and {\em continuous}.  In the discrete model, flow is composed of 
unsplittable items, e.g., people to be evacuated.  This constrains the other parameters such as capacities, to be integral as well.  In the continuous  model, the flow is an infinitely  splittable item such as water.  Capacity is then 
no longer constrained to be integral, i.e., it represents the rate at which flow can enter an edge.  As noted in \cite{Fleischer1998}, many techniques that work in the discrete  model can be modified to work in the continuous model as well.  This paper will assume the discrete model but the algorithms we present can also be modified (using methods akin to those in \cite{Fleischer1998}) to work within the same time bounds in the continuous case as well.}

{\em \small Note 2: The algorithms in this paper extend the general model to allow the sinks to be placed anywhere on the path and not just on vertices.  They can easily and straightforwardly be restricted to only permit sink placement on vertices with the same running times.
}


\section{Model Definition and Results}

\begin{figure}[t]
\begin{center}
\includegraphics[scale=0.6]{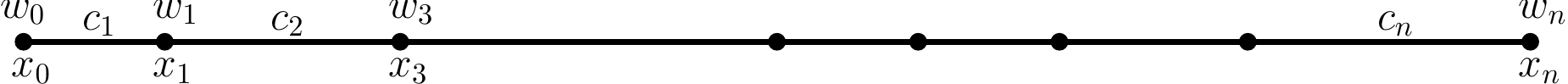}
\end{center}
\caption{The problem's  input specifies   vertices  $x_0 < x_1<\cdots < x_n$, the number of people $w_i$ starting at vertex $v_i$ and the capacity $c_i$  of the edges $e_i = (x_{i-1},x_i).$}
\label{fig:Basic_Input}
\end{figure}

Consider a path $P=(V,E)$ with $(n+1)$ vertices (buildings)
$V=\{x_{0},x_{1},...,x_{n}\}$ and $n$ edges (roads) 
$E=\{e_{1},e_{2},...,e_{n}\}$;  $e_{i}=(x_{i-1},x_{i})$. We also use $x_{i}$ to denote the (line coordinate)
location of the 
$i^{th}$ building and $x_0 < x_1 < ... < x_n$.  For $i <j$,  let $P_{i,j}$ denote the subpath $x_i < x_{i+1} < ... < x_{j}$.  The size of $P_{i,j}$ is the number of its vertices, i.e.,  $|P_{i,j}|= j-i+1$. Thus,
$|P|=n+1.$ See Figure \ref{fig:Basic_Input}.

Each vertex $x_{i}\in V$ has an  associated weight $w_i$ ($\ge 0$), denoting the number of evacuees (or simply, just {\em people})  in building $x_i$. All of the  evacuees on vertex $x_j$ must evacuate to some (the same) designated sink $x\in P$.
The distance between vertices $x_{i}$ and $x_{j}$ is 
$|x_{i}-x_{j}|$.  We use $\tau$ to denote the time required to travel one unit of distance.
Travel times between vertices, i.e, $|x_{i}-x_j|\tau$ are all assumed  to be integral.
Each edge $e_i$ also has an integral  capacity $c_i$, equal to the maximum number of evacuees who can enter $e_i$  in unit time.


It is convenient to interpret the capacity as the {\em width}  of the edge, i.e., the number of people who can travel together in parallel along the edge.  An  evacuation flow consists of \emph{waves} of evacuees moving through the edge. The size of a \emph{wave} is bounded by the capacity of the edge through which it is passing.
Each \emph{wave} covers a unit distance on the edge in time $\tau$. People wanting to travel an edge  will line up and wait at the entrance to the edge until they can join a wave to pass through it.


\begin{figure*}[t]
\centering
\hspace*{-0.75in}
\vspace*{.1in}
\begin{subfigure}{0.29\textwidth}
\includegraphics[scale=0.4]{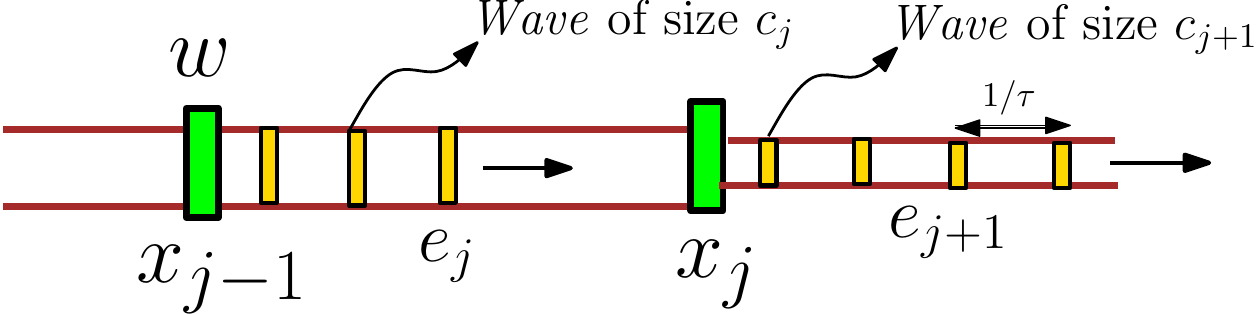}
\caption{\label{fig:waves} Waves of flow}
\end{subfigure}
\hspace{1in}
\begin{subfigure}{0.4\textwidth}
\includegraphics[scale=0.4]{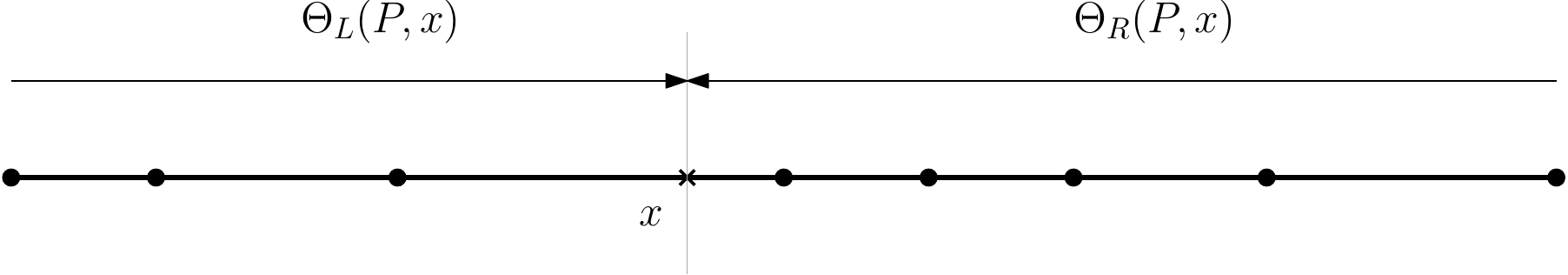}
\caption{\label{fig:1_sink_evac}$1$-sink evacuation}
\end{subfigure}
\caption{\label{fig:basic:evac} (a) illustrates how flow moves between edges.  $c_j$ people move from $x_{j-1}$ to $x_j$ in waves that are one time unit apart. The first wave reaches $x_j$ at time $\tau(x_{j} - x_{j-1})$ and the last at time $\tau(x_{j} - x_{j-1})+  \lceil w/c_{j}\rceil -1$. (b) illustrates how items in path $P$ evacuate to sink $x$.  $\Theta_L(P,x)$ and $\Theta_R(P,x)$ are, respectively, the times for the items to the left/right of $x$ to fully evacuate to $x$.}
\end{figure*}

Consider the simple situation in which $w$ people on  $x_{j-1}$ move to $x_j$ over edge $e_j$ (Figure \ref{fig:waves}).
At time $t=0$, the first $c_j$ people enter $e_j$ as the first wave; at $t=1$, the 2nd $c_j$ of them enter  $e_j$ as the 2nd wave,  etc. There are $\lceil w/c_j\rceil$ waves in total with all but possibly the last containing $c_j$ items.  The first wave reaches $x_j$ at $t=(x_j-x_{j-1})\tau$, the second at $t=(x_j - x_{j-1})\tau +1$, etc.  The last item 
reaches   $x_j$ at time $t=(x_j - x_{j-1})\tau + \lceil w/c_j\rceil -1.$

{\em Congestion} can occur in one of two ways.  In the first (Figure \ref{fig:2sinka}),  people travelling together in a wave reach a new edge whose capacity is smaller than the size of the wave, i.e., $c_{j+1} < c_j.$  In the second (Figure \ref{fig:2sinkb}), people reach a vertex at which too many other people are already waiting. In both cases, people will need to wait before entering the new edge. 





Given a path $P$ and a single sink $x\in P$,  {\em evacuation} consists of all people in the buildings moving to $x$. Note that $x$ is not required to be situated on a vertex but if it is, then all items starting at  $x$ are considered to be immediately evacuated  (in time $0$).  People to the left and right of the sink $x$ need to evacuate to $x$ (as shown in Figure \ref{fig:1_sink_evac}).   We assume by convention that items at a vertex  wait in a queue, with earlier arrivals leaving the vertex first.  
A major issue in the analysis of evacuation protocols is dealing with the congestion that occurs when items  have to wait at a vertex before leaving.  

\begin{figure*}[ht]
\vspace*{.3in}
\begin{subfigure}[b]{0.36\textwidth}
\includegraphics[width=2.2in]{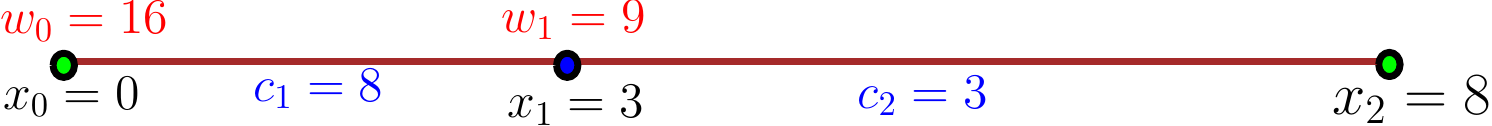}
\caption{\label{fig:2sinka}  $c_1 > c_2$}
\end{subfigure}
\hfill
\begin{subfigure}[b]{0.36\textwidth}
\includegraphics[width=2.2in]{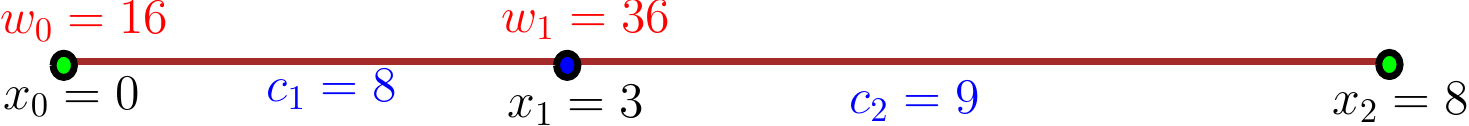}
\caption{\label{fig:2sinkb}$c_1 \le c_2$}
\end{subfigure}
\caption{\label{fig:2sink} In both  cases  $\tau=1$,  all items at $x_0$ and $x_1$ are evacuating to $x_2$  and   congestion occurs at $x_1$,   but for different reasons.  In both cases the first $c_1=8$  items from $x_0$ reach $x_1$ at time $t=3$ and the second 8 items at time $t=4.$  In (a),  because $c_1 > c_2$, not all of the first 8 items reaching $x_1$ can pass through $x_1$ immediately. It isn't until $t=13$ that the last items finally reach $x_2.$   In (b), $c_1\le c_2$ but congestion still occurs at $x_1$ because, at time $t=3$, 9 of the original items at $x_1$ still remain there and the new items from $x_0$ have to wait until they leave. It isn't until $t=12$ that the last items finally reach $x_2.$}
\end{figure*}


Before starting, we introduce some notation that allows us to formulate and state the results.
\begin{definition}\ 
\vspace*{-.1in}
\begin{itemize}
\item   $\Theta_L(P,x)$(resp. $\Theta_R(P,x)$)  is the time required   for all people to the left(resp. right) of sink $x$ to evacuate to $x$. 
\item Equivalently,  $\Theta_L(P,x)$(resp. $\Theta_R(P,x)$) is the time required for the last item from $x_0$ (resp $x_n)$ to reach $x$
\end{itemize}
\end{definition}





\vspace*{-.05in}
One of the  main technical results of this paper is the following theorem, proven in Section  \ref {sec:general_formula}.

\begin{theorem}
\label{thm:main form}
Let $k$ be such that $x_k < x \le x_{k+1}$. Then,
\begin{eqnarray}
 \Theta_{L}(P,x) &=& \max_{x_i<x}  \left( (x-x_{i})\tau   + \left\lceil\frac {\sum_{0\leq j\leq i}w_{j}}{\min_{i+1\leq j\leq k+1} c_{j}}\right\rceil  - 1 \right)\label{eq:left-evac-gc}\\
 \Theta_{R}(P,x) &=& \max_{x_i>x}  \left( (x_{i}-x)\tau  + \left\lceil\frac {\sum_{i\leq j\leq n}w_{j}}{\min_{k+1\leq j\leq n} c_{j}}\right\rceil  - 1 \right)\label{eq:right-evac-gc}
\end{eqnarray}
\end{theorem}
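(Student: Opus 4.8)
\emph{Reductions.} By reflecting the path it suffices to prove~\eqref{eq:left-evac-gc}: the two formulas are mirror images, and evacuees on opposite sides of $x$ use disjoint edges, so $\Theta_L$ depends only on the subpath from $x_0$ to $x_k$ together with $x$. On a path with a single sink a confluent flow must route every vertex $x_i$ with $i\le k$ rightward, so the routing is forced; and since for every prefix the cumulative out-flow of a vertex is bounded by (amount available so far) $+$ (edge capacity) $\times$ (elapsed time), the \emph{eager} protocol --- in which no evacuee idles while the next edge can still accept a wave, breaking ties in first-in first-out order --- maximizes this out-flow at every instant, hence is optimal for the last-arrival time. So the plan is: analyze the eager protocol and show the last left-side evacuee reaches $x$ at time exactly $T^\star:=\max_{x_i<x}\bigl((x-x_i)\tau+\lceil W_i/\mu_i\rceil-1\bigr)$, where $W_i=\sum_{0\le j\le i}w_j$ and $\mu_i=\min_{i+1\le j\le k+1}c_j$. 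Writing $B_j$ for the time the last evacuee leaves $x_j$ onto $e_{j+1}$ under the eager protocol, we have $\Theta_L(P,x)=B_k+(x-x_k)\tau$, so it suffices to pin down $B_k$.

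\emph{Lower bound, valid for any protocol.} Fix $i$ with $x_i<x$ and pick an edge $e_m$ with $i+1\le m\le k+1$ and $c_m=\mu_i$; when $m=k+1$ read $e_{k+1}$ as its initial segment of length $x-x_k$. Each of the $W_i$ evacuees originating at $x_0,\dots,x_i$ lies weakly to the left of $x_{m-1}$ and must reach $x>x_{m-1}$, hence traverses $e_m$, entering it at $x_{m-1}$. The first of them to enter $e_m$ does so no earlier than $(x_{m-1}-x_i)\tau$, the travel time from an origin $\le x_i$ to $x_{m-1}$; since each unit of time admits at most $c_m=\mu_i$ of them, the last of them enters $e_m$ at least $\lceil W_i/\mu_i\rceil-1$ time units later; and it then needs a further $(x-x_{m-1})\tau$ to reach $x$. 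Summing, $\Theta_L(P,x)\ge(x-x_i)\tau+\lceil W_i/\mu_i\rceil-1$, and maximizing over $i$ gives $\Theta_L(P,x)\ge T^\star$. This argument uses only capacity and travel-time constraints, so it bounds every feasible evacuation.

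\emph{Upper bound, eager protocol.} It remains to show the eager protocol achieves $\Theta_L(P,x)\le T^\star$. I would proceed by induction along the path. Let $d_j(t)$ be the cumulative number of evacuees that have left $x_j$ by time $t$; the eager rule gives the min-plus recursion $d_j(t)=\min\bigl(W_j,\ \min_{0\le s\le t}[\,w_j+d_{j-1}(s-\delta_j)+c_{j+1}(t-s)\,]\bigr)$, with $\delta_j=(x_j-x_{j-1})\tau$, base case $d_0(t)=\min(w_0,c_1(t+1))$, and $d_{j-1}(u)=0$ for $u<0$. Unrolling this recursion one obtains $B_j=\max_{0\le i\le j}\bigl((x_j-x_i)\tau+\lceil W_i/\min_{i+1\le\ell\le j+1}c_\ell\rceil-1\bigr)$; specializing to $j=k$ and adding $(x-x_k)\tau$ gives $\Theta_L(P,x)=T^\star$. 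Equivalently, and perhaps more transparently, one traces the trajectory of the last left-side evacuee $L$: its pure walking time telescopes to exactly $(x-x_p)\tau$, where $x_p$ is $L$'s origin, while every interval during which $L$ waits at a vertex $x_q$ consists of \emph{full} waves of width $c_{q+1}$, since otherwise $L$ would have joined one; charging the backlog present when $L$ first reaches $x_q$ against earlier waiting, the total waiting telescopes to $\lceil W_{i^\star}/\mu_{i^\star}\rceil-1$ for a suitable index $i^\star$, so $\Theta_L(P,x)=(x-x_{i^\star})\tau+\lceil W_{i^\star}/\mu_{i^\star}\rceil-1\le T^\star$. Together with the lower bound this proves the theorem, and~\eqref{eq:right-evac-gc} follows by the same argument on the reversed path.

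\emph{The main obstacle} is the upper bound: unlike the lower bound it must cope with \emph{congestion}, i.e.\ with the fact that an evacuee reaching a vertex may find a queue already there, so its departure time is not merely its arrival time plus a constant. This is exactly why the induction must carry the whole cumulative out-flow profile $d_j(\cdot)$ rather than the single number $B_j$, and, in the trajectory version, why one needs the full-wave-during-waits observation together with a careful accounting of backlog. The telescoping of travel times, the ceiling manipulations, and the boundary cases $x=x_{k+1}$ versus $x_k<x<x_{k+1}$ are routine by comparison.
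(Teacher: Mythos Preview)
Your lower bound is correct and is in fact a cleaner direct argument than the paper's: you pick the bottleneck edge $e_m$ realizing $\mu_i$ and count how many of the $W_i$ evacuees from $x_0,\dots,x_i$ can cross it per unit time, which yields $\Theta_L\ge (x-x_i)\tau+\lceil W_i/\mu_i\rceil-1$ without any preprocessing. The paper instead first proves a capacity-reduction lemma and only then does the (then easy) lower bound; your route avoids that detour for this direction.

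The upper bound, however, has a genuine gap. Both versions you offer stop exactly at the hard step. For the min-plus recursion you write ``unrolling this recursion one obtains $B_j=\max_i(\cdots)$'' --- but that unrolling \emph{is} the theorem, and you give no argument for it. For the trajectory version the bookkeeping does not close: the pure walking time of $L$ is $(x-x_0)\tau$, not $(x-x_{i^\star})\tau$, so you must account for the extra $(x_{i^\star}-x_0)\tau$; and the waiting at a vertex $x_q$ is governed by $c_{q+1}$, not by $\mu_{i^\star}$, so ``the total waiting telescopes to $\lceil W_{i^\star}/\mu_{i^\star}\rceil-1$'' is an assertion, not a derivation. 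Concretely, your full-wave observation only covers the interval after $L$ \emph{arrives} at $x_q$; it says nothing about the times before that arrival, and without control over those earlier departures you cannot bound $T_t$ from above.

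What the paper supplies here is exactly the missing device: a lemma showing that if $c_j>c_{j+1}$ then lowering $c_j$ to $c_{j+1}$ leaves $\Theta_L$ unchanged. Iterating this replaces every $c_j$ by $c'_j=\min_{j\le\ell\le k+1}c_\ell$, so capacities become nondecreasing toward the sink. Under that monotonicity, at the last waiting vertex $x_t$ the queue can never empty before $L$ departs (any arriving wave has size $\le c'_t\le c'_{t+1}$ and would pass straight through an empty queue, so ``empty once'' implies ``empty forever,'' contradicting that $L$ waits). Hence a full wave of size $c'_{t+1}=\mu_t$ leaves at every time $0,\dots,T_t-1$, giving $T_t=\lceil W_t/\mu_t\rceil-1$ on the nose. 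This is the idea your sketch is missing; without it, or an equivalent substitute, the upper bound is not established.
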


The specialization of these equations to the uniform capacity case  where 
$\forall i, c_i =c$ for some fixed $c>0$, was given in \cite{Kamiyama2009a} 
The proof there was very specific to the uniform capacity case and could not be extended to general $c_i$.

\medskip

{\em \small Note: As mentioned in the introduction, this paper assumes {\em discrete} flows but the results can be extended to continuous flows.  In the continuous flow case (\ref{eq:left-evac-gc}) and  (\ref{eq:right-evac-gc}) 
are replaced by
\begin{eqnarray}
 \Theta_{L}(P,x) &=& \max_{x_i<x}  \left( (x-x_{i})\tau   + \frac {\sum_{0\leq j\leq i}w_{j}}{\min_{i+1\leq j\leq k+1} c_{j}} \right)\label{eq:left-evac-gcc}\\
 \Theta_{R}(P,x) &=& \max_{x_i>x}  \left( (x_{i}-x)\tau  + \frac {\sum_{i\leq j\leq n}w_{j}}{\min_{k+1\leq j\leq n} c_{j}} \right).\label{eq:right-evac-gcc}
\end{eqnarray}
See \cite[p. 32]{Higashikawa2014c} for more details.
}

\medskip

Intuitively, for fixed $t$, the value on the right hand side of (\ref{eq:left-evac-gc}) is the time required for evacuation to $x$ if (i) there were no people to the right of $x_i$ and (ii) all the people on or to the left  of $x_i$ were originally located {\em on} $x_i$,  with a similar intuition for (\ref{eq:right-evac-gc}). The proof in Section \ref {sec:general_formula}  formalizes this intuition.

\begin{definition}
Define 
$$\Theta(P,x)=\max\left\{ \Theta_{L}(P,x),\Theta_{R}(P,x)\right\},
\quad \quad 
\Theta^1(P) =  \min_{x \in P} \Theta(P,x)$$
as, respectively, 
 the maximum time required to evacuate all people in $P$　\ to $x$.
and  the minimum time required to evacuate $P$ (to an optimally chosen sink).
\end{definition}

For fixed $x$, Theorem \ref{thm:main form} automatically gives  an $O(n)$ algorithm for constructing $\Theta(P,x)$.  First preprocess by using three  linear time scans to  calculate 
$W_i = \sum_{j=1}^i w_i$, $c'_i = \min_{0 < j \le i} c_i$  and $c''_i = \min_{i \le j \le n} c_j$.  Once these are known, $\Theta_L(P,x)$ and $\Theta_R(P,x)$ and thus $\Theta(P,x)$ can be calculated in another $O(n)$ time.  We state this as a claim for later reference

\begin{Claim}
\label{claim:theta1PXalg}
Given $x \in P$, $\Theta_L(P,x),$ $\Theta_R(P,x)$ and $\Theta(P,x)$ can be calculated in $O(|P|)$ time.
\end{Claim}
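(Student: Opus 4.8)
The plan is to make the stated preprocessing explicit and then read off the claimed running time directly from the formulas in Theorem~\ref{thm:main form}. First I would fix the index $k$ with $x_k < x \le x_{k+1}$; this can be found in $O(|P|)$ time by a single scan over the sorted vertices $x_0 < x_1 < \cdots < x_n$ (or in $O(\log|P|)$ by binary search, but $O(|P|)$ suffices here). Next I would perform the three announced linear scans: a prefix-sum scan to compute $W_i = \sum_{0 \le j \le i} w_j$ for all $i$, a left-to-right running-minimum scan to compute $c'_i = \min_{0 < j \le i} c_j$, and a right-to-left running-minimum scan to compute $c''_i = \min_{i \le j \le n} c_j$. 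Each scan visits every index once and does $O(1)$ work per index, so the preprocessing is $O(|P|)$.

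Having these arrays, I would observe that the quantity $\min_{i+1 \le j \le k+1} c_j$ appearing in~(\ref{eq:left-evac-gc}) is not literally one of the $c'_j$ values, since the range is $[i+1, k+1]$ rather than a prefix. The clean fix is to note that for the left side we only ever need minima of suffixes of the interval $(x_i, x_{k+1}]$; restricting the right-to-left running-minimum scan to the subpath $P_{0,k+1}$ (equivalently, recomputing $c''$ but stopping the index range at $k+1$) yields exactly $\min_{i+1 \le j \le k+1} c_j$ for each relevant $i$ in $O(|P|)$ total time. Symmetrically, for the right side the term $\min_{k+1 \le j \le n} c_j$ is a single fixed value $c'_n$ computed by the left-to-right scan restricted to $P_{k+1,n}$, or just $\min_{k+1 \le j \le n} c_j$ read off once. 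I would phrase this as: after an $O(|P|)$ preprocessing pass that records, for each $i$, the exact denominators $d^L_i := \min_{i+1 \le j \le k+1} c_j$ (for $x_i < x$) and the single denominator $d^R := \min_{k+1 \le j \le n} c_j$ (for the right side), every individual term inside the two maxima of Theorem~\ref{thm:main form} is evaluable in $O(1)$ time.

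Finally I would combine the pieces: $\Theta_L(P,x)$ is the maximum over the (at most $|P|$) indices $i$ with $x_i < x$ of the expression $(x - x_i)\tau + \lceil W_i / d^L_i \rceil - 1$, each term costing $O(1)$ given the precomputed arrays, so $\Theta_L(P,x)$ is obtained in $O(|P|)$ time; likewise $\Theta_R(P,x)$ is a maximum over at most $|P|$ terms of the form $(x_i - x)\tau + \lceil (\sum_{i \le j \le n} w_j)/d^R \rceil - 1$, where $\sum_{i \le j \le n} w_j = W_n - W_{i-1}$ is an $O(1)$ lookup, so it too takes $O(|P|)$ time. Then $\Theta(P,x) = \max\{\Theta_L(P,x), \Theta_R(P,x)\}$ is one more comparison. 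Summing, the total is $O(|P|)$, which is the claim. There is essentially no analytical obstacle here — Theorem~\ref{thm:main form} does the real work; the only point that needs a sentence of care is making sure the range-restricted minima $\min_{i+1 \le j \le k+1} c_j$ really are produced by a single directional scan rather than being confused with the global prefix/suffix minima $c'_i, c''_i$, which I would handle as described above.
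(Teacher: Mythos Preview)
Your proposal is correct and follows essentially the same approach as the paper: the paper's argument is the short paragraph immediately preceding the claim, which invokes Theorem~\ref{thm:main form} and the three linear scans for $W_i$, $c'_i$, and $c''_i$, exactly as you do. Your added sentence of care---that the denominators $\min_{i+1\le j\le k+1} c_j$ are suffix minima over the restricted range $[1,k+1]$ rather than the global $c'_i$ or $c''_i$---is a legitimate clarification the paper glosses over, but it does not change the method or the running time.
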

Section~\ref{sec:1_sink_evac} shows how to use  Claim~\ref{claim:theta1PXalg}  to compute the {\em minimum} 1-sink evacuation time:

\begin{theorem}
\label{thm:1sink}
Given $P,$  there is an  $O(|P| \log |P|)$ time algorithm for calculating 
$\Theta^1(P)$ and $x^*$ such that
$$\Theta^1(P) = \max(\Theta_L(P,x^*),\Theta_R(P,x^*)).$$
\end{theorem}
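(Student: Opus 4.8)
The plan is to reduce the minimization $\min_{x\in[x_0,x_n]}\Theta(P,x)$ to a binary search, by first extracting from the closed forms in Theorem~\ref{thm:main form} that $\Theta(P,\cdot)$ is ``unimodal''. Concretely, I would first show that $x\mapsto\Theta_L(P,x)$ is non-decreasing and $x\mapsto\Theta_R(P,x)$ is non-increasing on the whole path: as $x$ increases, each summand $(x-x_i)\tau+\lceil\cdots\rceil-1$ of \eqref{eq:left-evac-gc} can only grow, since the term $(x-x_i)\tau$ grows, new indices $i$ with $x_i<x$ may enter the outer maximum, and whenever $x$ crosses a vertex the index $k$ of Theorem~\ref{thm:main form} increases, so the denominator $\min_{i+1\le j\le k+1}c_j$ is a minimum over a larger set and can only decrease, which can only increase the ceiling; the mirror argument on \eqref{eq:right-evac-gc} handles $\Theta_R$. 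Hence $\Theta(P,x)=\max\{\Theta_L(P,x),\Theta_R(P,x)\}$ is quasiconvex in $x$. Second, I would note that on the \emph{interior} of a single edge $(x_k,x_{k+1})$ the value $k$ is fixed and all index sets in \eqref{eq:left-evac-gc}--\eqref{eq:right-evac-gc} are constant, so there $\Theta_L(P,\cdot)$ is affine with slope $\tau$ and $\Theta_R(P,\cdot)$ is affine with slope $-\tau$; thus $\Theta(P,\cdot)$ restricted to such an interval is a ``V'', minimized at the unique point where the two affine pieces meet if that point lies inside the edge, and monotone across the edge otherwise.

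Given these two facts, a global minimizer is one of the $n+1$ vertices or the meeting point of the two affine pieces strictly inside one edge. To locate the relevant edge I would binary search on the quantity $g(m):=\Theta_L(P,x_m)-\Theta_R(P,x_m)$, $0\le m\le n$, which by the first fact is non-decreasing in $m$, with $g(0)\le 0$ (no weight lies strictly left of $x_0$) and $g(n)\ge 0$; hence there is an index $m$ with $g(m)\le 0\le g(m+1)$. By Claim~\ref{claim:theta1PXalg} each value $g(m)$ costs $O(|P|)$ time, so the search runs in $O(|P|\log|P|)$ time and returns such an $m$. Using the monotonicity of $\Theta_L$ and $\Theta_R$ again, on each edge interior to the left of $(x_m,x_{m+1})$ the function $\Theta(P,\cdot)$ coincides with the non-increasing $\Theta_R(P,\cdot)$ (which is $\ge\Theta_R(P,x_m)=\Theta(P,x_m)$ there), and on each edge interior to its right with the non-decreasing $\Theta_L(P,\cdot)$; so the minimum of $\Theta(P,\cdot)$ over the whole path is realized at $x_m$, at $x_{m+1}$, or strictly inside the edge $(x_m,x_{m+1})$.

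It remains to treat that single edge. I would evaluate $\Theta_L(P,\cdot)$ and $\Theta_R(P,\cdot)$ at one interior point of $(x_m,x_{m+1})$ (say its midpoint) with Claim~\ref{claim:theta1PXalg}; since the slopes there are $\tau$ and $-\tau$, this determines both affine pieces and hence their meeting abscissa $y^*$. If $y^*\in(x_m,x_{m+1})$, the interior candidate is $x^*=y^*$ with value the common value of the two pieces at $y^*$; otherwise the edge contributes nothing better than its endpoints. Comparing this interior candidate with $\Theta(P,x_m)$ and $\Theta(P,x_{m+1})$, again via Claim~\ref{claim:theta1PXalg}, yields $\Theta^1(P)$ together with a witnessing $x^*$. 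Every step after the binary search costs $O(|P|)$, so the total running time is $O(|P|\log|P|)$.

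The step I expect to be the main obstacle is the bookkeeping behind the two structural facts and, above all, proving that the candidate set is complete. At a vertex, both the indices appearing in the outer maxima of \eqref{eq:left-evac-gc}--\eqref{eq:right-evac-gc} and the capacities appearing in the inner minima change, and weight sitting on a vertex sink leaves in time $0$; consequently $\Theta_L(P,\cdot)$ and $\Theta_R(P,\cdot)$ may jump at vertices, so $\Theta(P,\cdot)$ is only piecewise continuous. One therefore has to argue carefully---using the global monotonicity to bound the one-sided limits of $\Theta(P,\cdot)$ at the vertices---that none of these jumps hides a minimizer that is not already captured by the finitely many candidates above (the vertices and the single interior meeting point found by the binary search).
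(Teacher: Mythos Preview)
Your proposal is correct and follows essentially the same route as the paper: establish that $\Theta_L$ is non-decreasing, $\Theta_R$ non-increasing, both affine with slopes $\pm\tau$ on each open edge, binary search the sign of $\Theta_L(P,x_m)-\Theta_R(P,x_m)$ over vertices, and then resolve the single critical edge. The paper packages the final step as an explicit three-case formula for $x^*$ (Lemma~\ref{lem: 1opt}) rather than your ``compute $y^*$ and compare three candidates,'' and it reads the two affine pieces on $(x_m,x_{m+1})$ directly off the already-computed vertex values via $\Theta_L(P,x)=\Theta_L(P,x_{m+1})-\tau(x_{m+1}-x)$ and $\Theta_R(P,x)=\Theta_R(P,x_m)-\tau(x-x_m)$, so your midpoint evaluation is unnecessary (though harmless for the bound). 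Your closing worry about jumps is in fact already dispatched by your own third paragraph: for $j\le m$ monotonicity gives $\Theta(P,x_j)=\Theta_R(P,x_j)\ge\Theta_R(P,x_m)=\Theta(P,x_m)$, and symmetrically on the right, so no vertex outside $\{x_m,x_{m+1}\}$ can beat the three candidates.
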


A {\em $k$-sink evacuation protocol}  for a path $P$ requires splitting $P$ into $k$ subpaths and assigning a different sink to each subpath.
More explicitly (see Figure \ref {fig:k_sink_partition_a}), (i) let $I=(i_0,i_1\ldots,i_k)$ be an increasing sequence of indices $0=i_0<i_1<\cdots i_k=n$ that partition $P$ into $\hat{P}^I= \{P^I_1,P^I_2,\ldots,P^I_k\}$ where $P^I_j= P_{i_{j-1},i_{j+1}-1}$ 
and (ii) let the sinks be specified by $Y=(y_1,y_2,...,y_k)\in {\cal Y}^I$  where
${\cal Y}^I = \{(y_1,y_2,\ldots,y_k) \,:\, y_j \in P^I_j\}$.  The time required for the evacuation protocol 
for partition $I$ with sinks $Y$ 
is the maximum evacuation time for the $k$ subpaths.
$$\Theta^k(P,\{I,Y\}) = \max_{1 \le j \le k} \Theta^1(P^I_j,y_j)$$
 Let $\cal I$ be the set of all legal partitions of $P.$  Then
\begin{definition}
\label{def:ksink1}
An optimal $k$-sink evacuation protocol is one that has minimum value over all evacuation protocols, i.e., 
$$\Theta^k(P) = \min_{I \in {\cal I}} \max_{Y \in {\cal Y}^I} \Theta^k(P,\{I,Y\}).$$
\end{definition}

\begin{figure}[t]
\begin{center}
\includegraphics[scale=0.6]{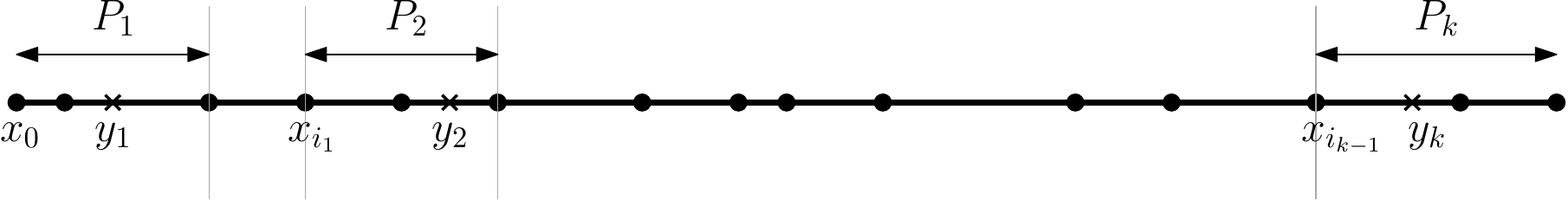}
\end{center}
\caption{In $k$-sink evacuation $P$ is partitioned into $k$ subpaths  $[x_{i_j},x_{i_j+1}-1]$ with each subpath having sink $y_j$.  The evacuation time is $\max_j \Theta(P_{i_j},y_j)$.,}
\label{fig:k_sink_partition_a}
\end{figure}

Section \ref{subsec:k_sink_evac_Test}   bootstraps off of Theorem \ref{thm:1sink} to prove
\begin{theorem}
\label{thm:ksink_ver}
Let $k \ge 0 $ be an integer and $\alpha >0$ a real. Then, 
\begin{equation}
\Theta^k(P) \le \alpha
\end{equation}
can be verified in $O(|P| \log |P|)$ time.  If $\Theta^k(P) \le \alpha$, the verification procedure will also output 
a $k$-sink evacuation protocol for $P,$ with evacuation time $\le \alpha$,  i.e., a partition $I$ and sinks $Y \in {\cal Y}^I $ such that
 $$\max_{1 \le j \le k} \Theta^1(P^I_j,y_j).$$
\end{theorem}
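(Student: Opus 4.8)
The plan is to reduce the $k$-sink verification question to a greedy sequence of $1$-sink feasibility tests, using Theorem~\ref{thm:1sink} as a black box to handle each subpath. The key observation is that in Definition~\ref{def:ksink1} the adversary on the sinks $Y$ is a maximization, so once a partition $I$ is fixed, the value of the protocol is $\max_j \min_{y_j} \Theta(P^I_j,y_j) = \max_j \Theta^1(P^I_j)$; hence $\Theta^k(P)\le\alpha$ holds iff there exists a partition of $P$ into at most $k$ (contiguous) subpaths each of which has $1$-sink evacuation time at most $\alpha$. Feasibility of ``can the prefix $P_{0,m}$ be evacuated by one sink within time $\alpha$?'' is monotone in $m$, so the natural greedy is: starting from $x_0$, extend the current subpath as far to the right as possible while $\Theta^1(\cdot)\le\alpha$ still holds, cut there, and repeat; accept iff the number of pieces produced is at most $k$. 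Standard exchange arguments show this greedy uses the fewest possible subpaths, so it correctly decides the predicate.

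First I would make precise the subroutine $\mathrm{Feasible}(i,m)$ that returns whether $\Theta^1(P_{i,m})\le\alpha$, together with a witness sink, which by Theorem~\ref{thm:1sink} costs $O(|P_{i,m}|\log|P_{i,m}|)$ time. Second I would prove the monotonicity claim: if $P_{i,m}$ is feasible then so is $P_{i,m'}$ for every $i\le m'\le m$ (removing rightmost vertices only removes supply and only shrinks the relevant distance terms and the $\min$-capacity denominators can only decrease the numerator's reach — more carefully, deleting vertices to the right cannot increase any term inside the maxima of Theorem~\ref{thm:main form}). Third I would establish the greedy optimality by the usual ``greedy stays ahead'' induction: if an optimal partition ends its $j$-th piece at index $a_j$ and the greedy ends its $j$-th piece at $g_j$, then $g_j\ge a_j$ for all $j$, so the greedy never needs more pieces. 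Fourth, I would bound the running time: the greedy cuts produce disjoint subpaths $Q_1,\dots,Q_t$, but the issue is that while searching for each cut point we may probe indices that end up belonging to later pieces, so a naive implementation is quadratic.

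The main obstacle, and the part requiring real care, is getting the running time down to $O(|P|\log|P|)$ rather than $O(k|P|\log|P|)$ or $O(|P|^2)$. The fix is to locate each greedy cut by \emph{exponential (galloping) search} followed by binary search: from the current left endpoint $i$, probe $i+1, i+2, i+4, i+8,\dots$ until $\mathrm{Feasible}$ first fails, then binary-search the last feasible index in that doubled range. If the $j$-th subpath has length $\ell_j$, this costs $O(\ell_j\log\ell_j)$ for the final feasibility evaluations plus $O(\log\ell_j)$ probe-and-test steps each of cost $O(\ell_j\log\ell_j)$, i.e.\ $O(\ell_j\log^2\ell_j)$ — still not quite the claimed bound, so one must instead observe that Claim~\ref{claim:theta1PXalg} gives an $O(|P|)$ evaluation of $\Theta(P,x)$ for a \emph{fixed} $x$, and that the $O(|P|\log|P|)$ of Theorem~\ref{thm:1sink} already absorbs a binary/ternary search over sink positions; reusing the precomputed prefix arrays $W$, $c'$, $c''$ globally (built once in $O(|P|)$ time) lets each feasibility probe run in $O(\log|P|)$ time via a single ternary search on the sink coordinate with $O(1)$-time range queries, whence the total over all galloping probes and all pieces telescopes to $O(|P|\log|P|)$. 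I would present the argument in that order — correctness of the reduction, monotonicity, greedy optimality, then the galloping-search implementation with the shared prefix-structure accounting — flagging the implementation/amortization step as the delicate one. \qed
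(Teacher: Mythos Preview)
Your high-level strategy---greedy peeling from the left, monotonicity of $\Theta^1(P_{i,m})$ in $m$, ``greedy stays ahead,'' and a doubling/galloping search for each cut point---is exactly what the paper does (Claim~\ref{claim:test_recur} and Algorithm~\ref{alg:test}). The correctness portion of your write-up is fine.

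The gap is in the running-time argument. Your first attempt, calling Theorem~\ref{thm:1sink} as a black box inside the galloping search, indeed yields only $O(\ell_j\log^2\ell_j)$ per piece and hence $O(|P|\log^2|P|)$ overall, as you noticed. But your proposed repair---precompute $W,c',c''$ once and then evaluate $\Theta(P_{i,m},x)$ in $O(1)$ per probe---does not work. By Theorem~\ref{thm:main form}, $\Theta_L(P_{i,m},x_s)$ is a \emph{maximum over all $t$ with $i\le t<s$} of terms $(x_s-x_t)\tau+\lceil (W_t-W_{i-1})/\min_{t<j\le s}c_j\rceil-1$. The prefix arrays let you evaluate any single such term in $O(1)$, but there are $\Theta(s-i)$ of them, and because the denominator $\min_{t<j\le s}c_j$ depends jointly on $t$ and $s$ (and the left weight sum depends on $i$), there is no static range-max structure over a fixed array that captures this. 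So each evaluation of $\Theta(P_{i,m},x)$ still costs $\Theta(m-i)$, and your ternary-search probe is $O((m-i)\log|P|)$, not $O(\log|P|)$.

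The missing idea is the decoupling in the paper's Claim~\ref{claim:peel}: one does \emph{not} search directly for the largest $j$ with $\Theta^1(P_{i,j})\le\alpha$ (which bundles a sink search inside each probe). Instead, first use a doubling search to find the rightmost point $x^*$ with $\Theta_L(P_{i,\cdot},x^*)\le\alpha$; this is well-defined because $\Theta_L$ ignores everything to the right of the sink, and each probe $\Theta_L(P_{i,i'},x_{i'})$ costs $O(i'-i)$ by Claim~\ref{claim:theta1PXalg}. Then, with $x^*$ \emph{fixed}, do a second doubling search for the rightmost $j^*$ with $\Theta_R(P_{i,j},x^*)\le\alpha$, again at $O(j-i)$ per probe. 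Claim~\ref{claim:peel} shows this $j^*$ coincides with your target. Now each piece of length $\ell_j$ costs $O(\ell_j\log\ell_j)$ (linear-time probes, logarithmically many of them), and since the pieces are disjoint, $\sum_j\ell_j\log\ell_j\le |P|\log|P|$. That is the step your proposal is missing.
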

Finally, Section \ref {subsec:k_sink_evac_Alg} combines Theorems \ref{thm:1sink} and \ref {thm:ksink_ver} to prove the final result.
\begin{theorem}
\label{thm:ksink}
There is an $O(k |P| \log^2 |P|)$ algorithm for calculating $\Theta^k(P)$.  This algorithm will also
provide a $k$-sink evacuation protocol for $P,$  i.e., a partition $I$ and sinks $Y \in {\cal Y}^I $ such that
\begin{equation}
\Theta^k(P) =  \max_{1 \le j \le k} \Theta^1(P^I_j,y_j)
\end{equation}
\end{theorem}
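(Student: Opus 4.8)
The plan is to reduce the optimization to the decision problem already solved in Theorem~\ref{thm:ksink_ver} and then to locate $\Theta^k(P)$ exactly by a parametric (Megiddo-style) search that uses that decision procedure as an oracle. I would begin by isolating the structure of the possible optimal values. By Definition~\ref{def:ksink1}, $\Theta^k(P)$ equals $\Theta^1$ of the bottleneck subpath of an optimal partition, so it lies in the family $\mathcal{C}=\{\Theta^1(P_{a,b}):0\le a\le b\le n\}$. Moreover, Theorem~\ref{thm:main form} shows every member of $\mathcal{C}$ is an integer multiple of $\tfrac12$: the optimal sink of a subpath is either one of its vertices, giving an integer of the form $d\tau+\lceil W/c\rceil-1$, or an interior ``balance point'' where the left and right evacuation times coincide, in which case equating $(x^\ast-x_a)\tau+A=(x_b-x^\ast)\tau+B$ with integers $A,B$ (each of the form $\lceil W/c\rceil-1$) yields the half-integer $\tfrac12\big((x_b-x_a)\tau+A+B\big)$. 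I would also record the monotonicity fact, immediate from the formulas of Theorem~\ref{thm:main form}, that for fixed left endpoint $a$ the value $\Theta^1(P_{a,b})$ is non-decreasing in $b$. This makes the greedy that repeatedly extends the current subpath as far right as possible while keeping $\Theta^1(\cdot)\le\alpha$ an optimal way to partition $P$ into fewest subpaths each evacuable within $\alpha$; hence $\Theta^k(P)\le\alpha$ iff this greedy uses at most $k$ subpaths (one concrete form of the test of Theorem~\ref{thm:ksink_ver}).

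Next I would run this greedy with the optimum $\alpha^\ast=\Theta^k(P)$ treated as an unknown, maintaining an interval $(\lambda^-,\lambda^+]$ known to contain $\alpha^\ast$, initialized with $\lambda^+=\Theta^1(P)$ (computed by Theorem~\ref{thm:1sink} in $O(n\log n)$ time) and $\lambda^-=-\tfrac12$. Phase $j\in\{1,\dots,k\}$ binary-searches for the largest right endpoint $b$ with $\Theta^1(P_{a_{j-1},b})\le\alpha^\ast$; each probe computes the concrete value $v=\Theta^1(P_{a_{j-1},b})$ via Theorem~\ref{thm:1sink} in $O(n\log n)$ time and must decide ``$v\le\alpha^\ast$?''. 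If $v\le\lambda^-$ or $v>\lambda^+$ the comparison is already settled; otherwise I invoke the decision procedure of Theorem~\ref{thm:ksink_ver} on $v$ (and, using that every candidate is a multiple of $\tfrac12$, on $v-\tfrac12$ to separate the case $v=\alpha^\ast$), which both settles the comparison and strictly shrinks $(\lambda^-,\lambda^+]$ with endpoints kept inside $\mathcal{C}\cup(\mathcal{C}-\tfrac12)$. Once the (at most) $k$ phases complete, the interval has been narrowed so that $\lambda^+=\Theta^k(P)$, and a final call to the procedure of Theorem~\ref{thm:ksink_ver} with $\alpha=\lambda^+$ outputs a witnessing partition $I$ and sinks $Y\in\mathcal{Y}^{I}$.

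The cost is $k$ phases, each with $O(\log n)$ probes, each probe doing one $O(n\log n)$ evaluation of $\Theta^1$ on a subpath plus $O(1)$ calls to the $O(n\log n)$-time decision procedure; this is $O(n\log^2 n)$ per phase, hence $O(kn\log^2 n)$ overall, with the single initial and final calls adding only $O(n\log n)$.

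The step I expect to be the main obstacle is the bookkeeping that makes this parametric search both correct and exact: proving the monotonicity of $\Theta^1(P_{a,b})$ in $b$ cleanly from Theorem~\ref{thm:main form}; checking that the symbolic greedy only ever compares $\alpha^\ast$ against members of the clean family $\mathcal{C}$; and, most delicately, showing that after the $k$ phases the surviving interval really does isolate $\Theta^k(P)$ --- this relies on the half-integrality of $\mathcal{C}$, a fixed tie-breaking convention at equalities, and the argument that the execution path taken for the unknown $\alpha^\ast$ is the same one the greedy would take at $\lambda^+$. A minor point to settle alongside is that carving a subpath $P_{a,b}$ out of a partition restricts its sink to lie within $P_{a,b}$, so that $\Theta^1(P_{a,b})$, which optimizes the sink over exactly that subpath (possibly at a non-vertex point), is the correct per-subpath evacuation cost.
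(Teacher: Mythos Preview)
Your parametric-search approach is correct and achieves the stated bound, but it takes a genuinely different route from the paper. The paper never searches over candidate values of $\Theta^k(P)$. Instead it proves a structural recurrence (Theorem~\ref{thm:ksink_main}): for $t=\max\{i:\Theta^1(P_{0,i})<\Theta^{k-1}(P_{i+1,n})\}$ one has $\Theta^k(P)=\min\bigl(\Theta^1(P_{0,t+1}),\,\Theta^{k-1}(P_{t+1,n})\bigr)$, and it binary-searches for this index $t$ directly. At each candidate $i$ it computes $\alpha=\Theta^1(P_{0,i})$ exactly via Theorem~\ref{thm:1sink} and tests $\Theta^{k-1}(P_{i+1,n})\le\alpha$ via Theorem~\ref{thm:ksink_ver}; having found $t$, it recurses on the right piece with $k-1$ sinks. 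This is a plain peel-one-block-and-recurse scheme, not a Megiddo simulation, and its correctness needs only the monotonicity of Claim~\ref{claim:unimodal_k} and the recurrence of Claim~\ref{claim:min_k}. Your approach buys generality (the parametric-search template is reusable) at the price of the delicate bookkeeping you flag. The worry you single out---that the surviving interval actually isolates $\alpha^\ast$---does resolve: if every subpath produced by the greedy at $\alpha^\ast$ had $\Theta^1\le\alpha^\ast-\tfrac12$, then the greedy at $\alpha^\ast-\tfrac12$ would output the identical partition (each right endpoint is still maximal, since the one-vertex extension already exceeded $\alpha^\ast$), contradicting minimality of $\alpha^\ast$; hence some phase's terminal probe hits $v=\alpha^\ast$ exactly and the oracle calls on $v$ and $v-\tfrac12$ pin it down. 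One small imprecision in your half-integrality derivation: the vertices in $(x^\ast-x_a)\tau+A=(x_b-x^\ast)\tau+B$ should be the maximizers from Theorem~\ref{thm:main form}, not the subpath endpoints $a,b$; the conclusion is unaffected since all vertex travel times are integral.
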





\section{The 1-sink Evacuation Time Formula}
\label{sec:general_formula}
The goal of this section is to prove Theorem \ref{thm:main form}.
Since the expressions for $\Theta_L(P,x)$ and $\Theta_R(P,x)$, the  left and right evacuation times,    will be symmetric,   we only prove the formula for $\Theta_L(P,x)$.   That is, we derive a formula for calculating the evacuation time of all nods on a  path to a sink on the left of the path.  This formula will be evaluable in $O(n)$ time.

The best previously known  for calculating the evacuation  time was  $O(n \log^2 n)$  using the algorithm for a tree given in \cite{Mamada2006}.  Note that this algorithm did not evaluate a formula.  It essentially simulated the movement of the flows using clever data structures.  For a path, a formula was known
\cite{Kamiyama2009a} for uniform capacities but the proof for that case did not extend to general capacities. \footnote{
Our proof of Theorem \ref{thm:main form} follows from a first principal examination of  the actual structure of the flow.  One of the main tools for attacking single source single sink dynamic flows is a variation, originally introduced by Ford and Fulkerson  \cite{Ford1958}  (see the survey of \cite{Skutella2009} for more details)  of the max-flow min-cut theorem  for static flows.  It is an interesting open question as to whether the formula of Theorem \ref{thm:main form} could be derived using similar cut techniques.}

To prove  Theorem \ref{thm:main form} first  consider a dynamic path $P$ with edge capacities $c_i$ and  a sink $x\in P$. Let $P_L$ be the path to the left of sink $x\in P$. 
(Figure \ref{fig:gen-capacity})  Let $x_r$ be the the rightmost vertex satisfying $x_r <x$ (if $x$ is on a vertex, $x=x_{r+1}$. Let $W_i= \sum_{0 \le j \le i} w_j$  be the number of people on or to the left of node $i$.
For our proofs we assume that the $W_r$ people who are evacuated are labelled consecutively from left to right, with the $w_i$ people on node $i$ labelled arbitrarily. That is, the people on vertex $x_i$ are labelled from $W_{i+1}$ to 
$W_i$. Without loss of generality our proofs  assume that people wait at each vertex in a first-in first-out queue and leave each vertex in   increasing labelled order.


\begin{lemma} 
\label{lem:capacity-change} Suppose there exists a vertex $x_j\in P_L$ such that 
$c_j > c_{j+1}$. Then the left evacuation time does not change if the  capacity of $e_j$ is set  to be  $c_{j+1}$.
\end{lemma}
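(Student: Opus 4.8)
The key observation is that reducing the capacity of $e_j$ from $c_j$ to $c_{j+1}$ can only slow the flow down, so the evacuation time cannot \emph{decrease}; the real content is that it cannot \emph{increase} either. The plan is to argue that, because $c_{j+1} < c_j$, in \emph{any} evacuation to the left sink $x$ no wave of more than $c_{j+1}$ people ever actually enters edge $e_{j+1}$ simultaneously, hence the edge $e_j$ is never used at a rate exceeding $c_{j+1}$ by any flow that ultimately needs to cross $e_{j+1}$ on its way to $x$. Since every person on $P_L$ that must traverse $e_j$ must afterwards traverse $e_{j+1}$ (the sink is to the left of $x_j$, and the path is a line), the bottleneck $c_{j+1}$ downstream already throttles the usable throughput of $e_j$ to at most $c_{j+1}$. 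Therefore capping $c_j$ at $c_{j+1}$ removes only ``phantom'' capacity that no optimal (indeed, no sensible FIFO) evacuation could have exploited.

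First I would make this precise using the FIFO / labelled-queue convention already set up before the lemma: people are labelled $1,\dots,W_r$ left to right, wait in first-in-first-out queues, and leave each vertex in increasing label order. Fix the optimal left-evacuation schedule on the original path $P$. For a person $p$ and index $m$, let $a_m(p)$ be the time $p$ arrives at $x_m$ and $d_m(p)$ the time $p$ departs $x_m$ (enters $e_m$), for all $m \le r$ that $p$ traverses. I would show, by a ``no-overtaking'' / conservation argument along the single edge $e_{j+1}$, that at no integer time $t$ do more than $c_{j+1}$ people depart $x_j$ into $e_j$ among those who will eventually cross $e_{j+1}$ --- equivalently, the departure process from $x_j$ can, without loss of generality, be rescheduled so that at most $c_{j+1}$ people leave $x_j$ per time step. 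Concretely: whatever schedule is optimal, consider the sequence of departure times from $x_j$ of the people in label order; if some time step sends more than $c_{j+1}$ of them, those surplus people are forced to wait at $x_{j+1}$ before entering $e_{j+1}$ anyway, so I can delay their departure from $x_j$ to fill later ``under-used'' slots of $e_j$ without changing when they (or anyone behind them, by FIFO) reach $x_{j+1}$ and beyond. This is the crux and, I expect, the main obstacle: one must verify carefully that this rescheduling is feasible (never forces more than $c_j$, a fortiori more than the new $c_{j+1}$, through $e_j$ at once; respects travel time; respects the FIFO order so that the arrival times at $x_{j+1}, x_{j+2}, \dots$ and ultimately at $x$ are not worsened) and that the congestion at $x_{j+1}$ genuinely absorbs the surplus --- the two congestion mechanisms illustrated in Figure \ref{fig:2sink} are exactly what make this work, since $c_j > c_{j+1}$ is precisely case (a).

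Once such a ``capacity-$c_{j+1}$-respecting'' optimal schedule on $P$ is in hand, it is by construction also a feasible evacuation schedule on the modified path $P'$ (where $c_j$ has been lowered to $c_{j+1}$), with the same evacuation time; hence $\Theta_L(P',x) \le \Theta_L(P,x)$. For the reverse inequality, any feasible schedule on $P'$ is trivially feasible on $P$ (lowering a capacity only removes options), so $\Theta_L(P,x) \le \Theta_L(P',x)$. Combining the two gives $\Theta_L(P',x) = \Theta_L(P,x)$, which is the claim. I would present the rescheduling argument as the single lemma-internal claim, state it for the departure process out of $x_j$, prove it by an explicit exchange/shifting argument in label order, and then deduce the lemma in two lines. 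The right-evacuation statement is symmetric and needs no separate proof.
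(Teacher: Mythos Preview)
Your high-level intuition is right, and the target is the same as the paper's: one must show that the departure process out of $x_j$ into $e_{j+1}$ is unchanged when $c_j$ is lowered to $c_{j+1}$. But the proposal has a directional muddle and a real gap at the crux. On directions: in $\Theta_L$ the people of $P_L$ lie to the \emph{left} of the sink and move \emph{right}; with the paper's convention $e_m=(x_{m-1},x_m)$ one departs $x_{j-1}$ into $e_j$, reaches $x_j$, and then enters $e_{j+1}$. Your parenthetical ``the sink is to the left of $x_j$'' is backwards, and phrases like ``departs $x_m$ (enters $e_m$)'', ``wait at $x_{j+1}$ before entering $e_{j+1}$'', and ``delay their departure from $x_j$ to fill later under-used slots of $e_j$'' are each off by one vertex. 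There is also a framing slip: $\Theta_L$ here is \emph{defined} as the completion time of the fixed FIFO dynamics, not as a minimum over feasible schedules, so exhibiting some $P'$-feasible schedule of the right length does not by itself bound $\Theta_L(P',x)$; you would additionally need that FIFO is earliest-arrival on a path, which you have not argued.

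The substantive gap is at exactly the step you flag as ``the crux''. Your exchange argument says the surplus beyond $c_{j+1}$ in an $e_j$-wave would have waited at $x_j$ anyway, so delaying it at $x_{j-1}$ is harmless. That is immediate only when the delayed person still reaches $x_j$ no later than the time they would have departed $x_j$ in the original process. The delicate situation is when, after throttling $e_j$ to rate $c_{j+1}$, person $\ell{+}1$ arrives at $x_j$ \emph{after} $\ell$ has already left; nothing in your sketch explains why $\ell{+}1$'s departure from $x_j$ is nevertheless the same in both processes. The paper handles precisely this by comparing the two FIFO processes directly (no auxiliary reschedule): it first proves by induction that departures from $x_{j-1}$ can only be later under the reduced capacity, and then proves by a separate induction that departures from $x_j$ are \emph{identical}, via a three-way case split. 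The hard case is when $\ell{+}1$ experienced congestion at $x_{j-1}$ in the reduced-capacity process and reaches $x_j$ after $\ell$ has left; the key observation is that this forces the wave carrying $\ell$ across $e_j$ to be full with exactly $c_{j+1}$ people, all of whom (by the induction hypothesis) occupy the $c_{j+1}$ outgoing slots of $e_{j+1}$ at time $t(\ell)$ in \emph{both} processes, so $\ell{+}1$ is pushed to time $t(\ell)+1$ either way. Your exchange/shifting argument would need exactly this observation to close, and as written it does not supply it.
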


\begin{figure*}[tp]
\begin{subfigure}[c]{0.4\textwidth}
\includegraphics[width=3in]{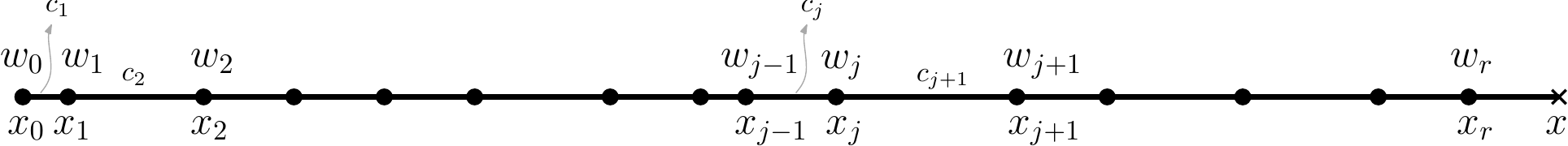}
\caption{\label{fig:gen-capacity}}
\end{subfigure}
\hfill
\begin{subfigure}[c]{0.40\textwidth}
\hspace*{.38in} \includegraphics[width=2in]{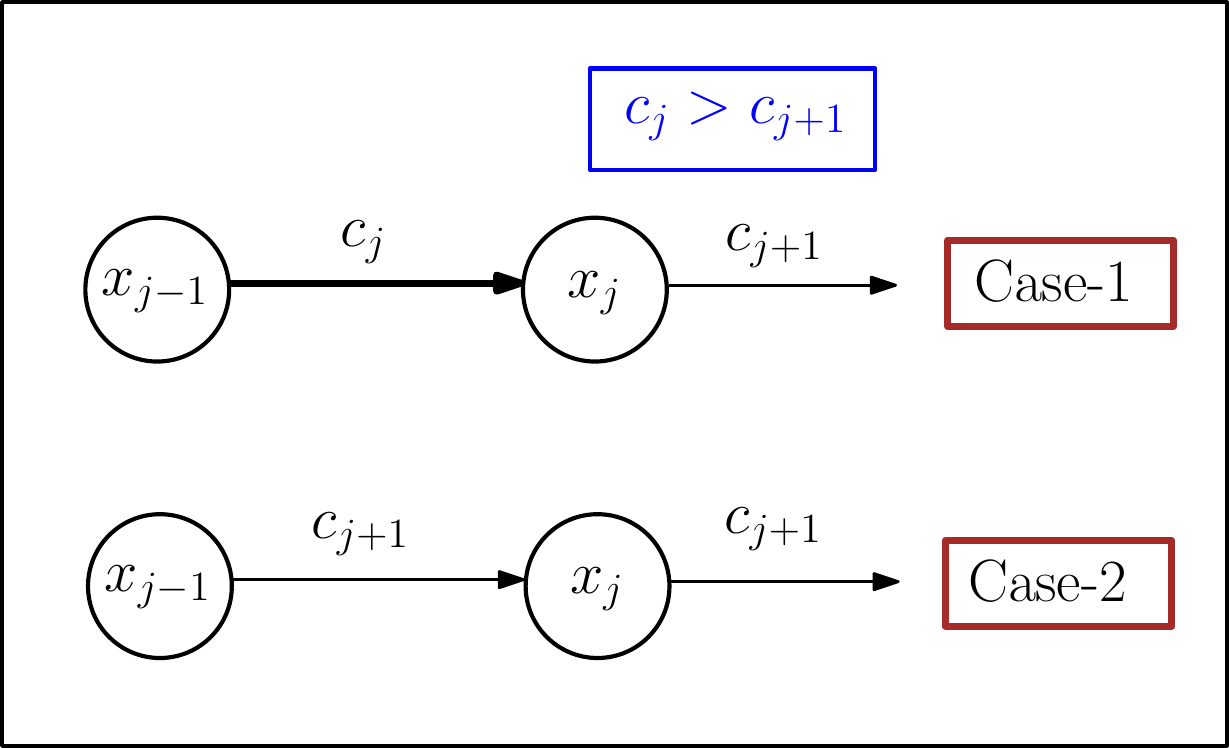}
\caption{\label{fig:gen-capacity-waves}}
\end{subfigure}
\caption{Left Evacuation. In (a) everyone is evacuating to the rightmost point $x.$
$x_r$ is the rightmost path vertex satisfying $x_r <x.$ (b) illustrates the two Cases in the proof of Lemma \ref{lem:capacity-change}.
 Case 1 has original capacities.  Case 2 replaces $c_j$ with $c_{j+1}$.}
\end{figure*}

\begin{proof} Refer to Figure \ref {fig:gen-capacity-waves}.
Define {\bf  Case  1} to   be the evacuation with  the original capacities and {\bf Case 2} to be  the evacuation when  $c_j$ is replaced by $c_{j+1}$.

The analysis requires notation to track the flow of people through edge $e_j$.
Since we will only be counting  people starting on or to the left of $x_j$, subtract $W_r-W_j$ 
from each label.  Person 1 is now the first person on or to the left of $x_j$ and person $W_j$ is the last person on the leftmost node, i.e., the last person that will pass through $x_j$.

People $\ell \le w_j$ start on $x_j$.  For Case $i$, $i=1,2$, for the nodes starting on $x_{j-1}$ or to its left, i.e., $\ell> w_j$, set 
$$
a'(\ell) =  \mbox{time item $\ell$ arrives at node $x_{j-1}$}, 
\quad
t'_i(\ell) =    \mbox{time  item $\ell$ leaves $x_{j-1}$}
$$
where we use the convention that if $\ell$ starts on $x_j$,  $a'(\ell) =0.$ 
Note that the path to the left of $x_{j-1}$ is the same in both cases, so  $a'(\ell)$ is the same in  both cases.  But, because the capacity of $e_j$ is different in the two cases, $t'_i(\ell)$ can be different for $i=1,2$.

 We first prove   by induction that
$\forall \ell >  w_{j} ,\quad t'_1(\ell)   \le  t'_2(\ell).$
This  is obviously true for $\ell = w_{j} +1,$ since the first item starting or passing through 
$x_{j-1}$ doesn't experience congestion at $x_{j-1}$ so 
$$ t'_2(w_{j} +1) = a'(w_{j} +1) = t'_1(w_{j} +1) .$$
Now suppose that 
$t_1'(\ell) \le t_2'(\ell)$    for all $\ell< \ell'$.  From the definitions, $s < \ell$ iff $s$ is {\em ahead of}  $\ell$, either on a vertex to the right of $\ell$ or on the same vertex as $\ell$ but ahead of it in the queue. Thus, in Case $i,$ the number of people waiting at $x_{j-1}$ at the time $\ell'$ arrives there is 
\begin{equation}
\label{eq:nidef}
n_i(\ell') = | \{s \,:\, s < \ell'  \mbox{ and }  a'(\ell') \le t'_i(s)\}|.
\end{equation}

 By the  induction hypothesis $n_2(\ell') \ge n_1(\ell')$ so $\ell'$ waits for at least as many items in case 2 as in case 1.  Th capacity of $e_j$ in Case 2 is no greater than the capacity in Case 1, so  $\ell'$ waits on $x_{j-1}$ at least as long in Case 2 as in Case 1,  proving $t'_1(\ell')   \le  t'_2(\ell')$  and thus the validity of $t'_1(\ell)   \le  t'_2(\ell)$ for all $\ell \ge  w_{j} +1.$
 
Now  set 
\begin{eqnarray*}
a_i(\ell) 
&=&
\left\{
\begin{array}{ll}
0 & \mbox{if $\ell \le w_{j+1}$}\\
t'(\ell) + (x_{j} - x_{j-1})\tau     & \mbox{if $\ell > w_{j+1}$}
\end{array}
\right.\\
&=& \mbox { the time that $\ell$ arrives at $x_{j}$}\\
 t_i(\ell)  &=&  \mbox{time that item $\ell$ leaves  node $x_{j}$}
\end{eqnarray*}
 
 Note that  $\forall \ell >  w_{j} ,\quad t'_1(\ell)   \le  t'_2(\ell).$ immediately implies
 \begin{equation}
\label{eq:a'_ind}
\forall \ell,\quad  a_1(\ell)  \le a_2(\ell)
\end{equation}

To prove the lemma it suffices to prove 
\begin{equation}
\label{eq:maint}
\forall \ell \le m,\quad   t_1(\ell) = t_2(\ell)
\end{equation}
i.e., the time that nodes leave  $x_{j}$  is  independent of the case.
Since the path from $x_{j}$ to the sink is the same for both cases, the final evacuation time can not differ in the two cases.

(\ref{eq:maint}) will also be proven by induction. Since the first $c_{j+1}$ items, regardless of where they start, do not experience any congestion waiting for $e_j$ or $e_{j+1}$
$$\forall \ell \le c_{j+1},\ t_1(\ell) = t_2(\ell).$$
Now assume that $t_1(\ell') = t_2(\ell')$ for all $\ell' \le \ell$.  There are three different situations in Case 2;  each of them will be shown to  imply $t_1(\ell+1) = t_2(\ell+1)$, proving the induction step.

\medskip

\par\noindent 
{\bf (a) \underline{$\bf a_2(\ell+1) \le t_2(\ell):$}\\[0.05in]
In Case 2,  $\ell+1$ arrives at $x_j$ before $\ell$ leaves $x_j$.\\[0.1in]}
Consider the situation  in Case 1 at $x_{j}$ at time $a_2(\ell+1).$\\  Because
$a_1(\ell+1) \le a_2(\ell+1)$,  $\ell+1$ has already arrived at $x_{j+1}$.  Because
$$a_1(\ell+1) \le a_2(\ell+1) \le t_2(\ell) = t_1(\ell)$$
$\ell+1$ has not left $x_{j}$. Thus  at time $a_2(\ell+1),$  in both Case 1 and Case 2,  $\ell+1$ is at $x_{j+1}$ and the set of items it is waiting for is identically
$$\{u < \ell +1 \,:\, t_1(u) \ge a_2(\ell+1)\}
=\{u < \ell +1 \,:\, t_2(u) \ge a_2(\ell+1)\}
$$
(that  these sets are equal follows  from the induction hypothesis). Since the sets are identical, 
in both cases $\ell+1$ waits the same further amount of time to leave $x_{j}$ and $t_1(\ell+1) = t_2(\ell+1)$.

\medskip

\par\noindent 
{\bf (b) \underline{$\bf a_2(\ell+1) > t_2(\ell)$ and $a'_2(\ell+1) =t'_2(\ell+1):$ }\\[0.05in]

In Case 2, $\ell+1$ did not experience congestion at $x_{j-1}$ and arrives at $x_j$ after
$\ell$ leaves $x_j$.\\[0.1in]}
Note that
$$a'(\ell +1) \le t'_1(\ell+1) \le t'_2(\ell+1) = a'(\ell+1)$$
where the first inequality is by definition, the second is from (\ref{eq:maint}) and the third is by assumption.   Therefore $t'_1(\ell+1) = t'_2(\ell+1)$ and thus $a_1(\ell+1) = a_2(\ell+1).$

From the induction hypothesis, 
$$a_1(\ell+1)=a_2(\ell+1) > t_2(\ell) = t_1(\ell)$$
so, in  both cases, when $\ell+1$ arrives at $x_{j}$ no other nodes are waiting at $x_{j}$.
Thus, in both cases, $\ell+1$ immediately leaves $x_{j}$ with no delay and
$$t_1(\ell+1) = a_1(\ell+1) = a_2(\ell+1) =  t_2(\ell+1).$$

\newpage

\par\noindent 
{\bf (c) \underline{$\bf a_2(\ell+1) > t_2(\ell)$ and $a'_2(\ell+1)  < t'_2(\ell+1):$ }\\[0.05in]
In Case 2, $\ell+1$ DID experience congestion at $x_{j-1}$ and arrives at $x_j$ after
$\ell$ leaves $x_j$.\\[0.1in]}
Consider the wave that transports $\ell+1$ from $x_{j-1}$ to $x_{j}$ in Case 2.
If that wave also  contained $\ell$, then $\ell$ and $\ell+1$ arrive at 
$x_{j}$ at the same time so
$$ a_2(\ell+1) = a_2(\ell) \le t_2(\ell)$$
contradicting the assumption $a_2(\ell+1) > t_2(\ell)$.

Thus $\ell$ and $\ell+1$ are in different waves.  Since $\ell+1$ waited at $x_{j-1}$, the wave containing $\ell+1$ arrives at $x_{j}$ exactly one time unit after the wave containing $\ell,$ i.e.,
$$a_2(\ell+1) = a_2(\ell) +1 \le t_2(\ell) +1$$
Thus, $a_2(\ell) = t_2(\ell)$ and $a_2(\ell+1) = t_2(\ell) +1.$

Furthermore, the only way that $\ell+1$ could not be in the same wave as $\ell$ in Case 2  is if the wave $W$ containing $\ell$ was full, i.e.,
$$W = \{\ell,\ell-1,\ldots,\ell-c_j+1\}.$$
Since all of these items arrive at $x_{j}$ at  time $a_2(\ell)$ and none of them are there at time $a_2(\ell+1)$ (if any of them stayed, then $\ell$ would have stayed) we must have that $t_2(\ell-j) = a_2(\ell)=t_2(\ell)$ for $j=0,1,\ldots, c_{j+1}-1$.

Now consider the behavior of $\ell+1$ in Case $1$. 
$a_1(\ell+1) \le a_2(\ell+1) = t_2(\ell)+1$, 
so $\ell+1$ is at $x_{j}$ by time $t_2(\ell) +1.$  But, by assumption, 
$t_1(\ell-j) = t_2(\ell-j) =t_2(\ell)$ for $j=0,1,\ldots, c_{j+1}-1$.  
Since those $c_{j+1}$ items are already leaving $x_{j}$ at time $t_2(\ell)$ over an edge of capacity $c_{j+1}$, $\ell+1$ can not join them.  So $t_1(\ell+1) = t_2(\ell)+1 = t_2(\ell+1)$ and we are done.
\end{proof}

\begin{figure}
\begin{center}
\includegraphics[width=5in]{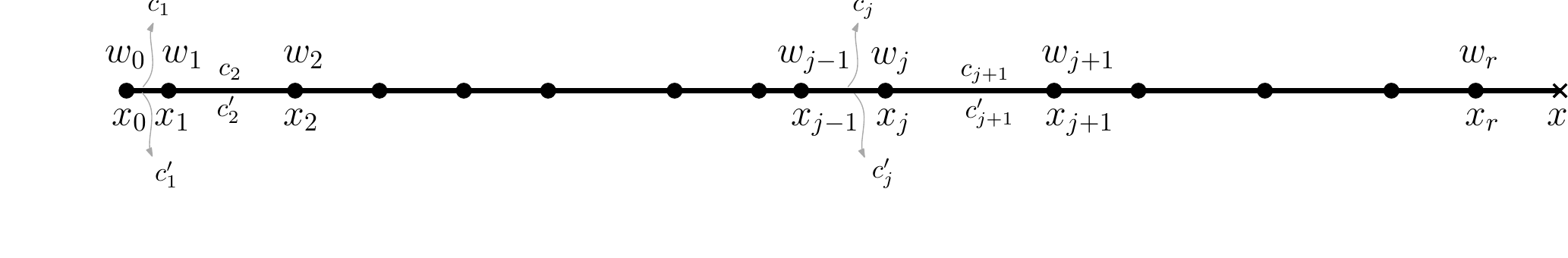}
\end{center}
\caption{All vertices are evacuating to the rightmost point $x$.  Path $P$ is the path with capacities $c_i.$  Path $P'$ is the same path but with the capacities replaced by $c'_i = \min_{i \le  j \le r+1} c_j$.  Corollary \ref {cor:eq-problem} states that $\Theta_L(P,x) = \Theta_L(P',X)$}
\label{fig:ThetaL_2}
\end{figure}

\medskip

Starting with the rightmost edge and walking left, applying Lemma \ref{lem:capacity-change} to each edge yields (See Figure \ref {fig:ThetaL_2})

\begin{corollary} \label{cor:eq-problem} Let $x$ be the location of the sink and $x_r$ the rightmost vertex to the left of $x$.  Let $P'$ be $P$ with new edge capacities
$c'_i = \min_{i \le  j \le r+1} c_j$, $i=0,1,\ldots,r$. Then
$$\Theta_L(P,x) = \Theta_L(P',x)$$
\end{corollary}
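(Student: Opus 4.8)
The plan is a right-to-left induction that peels off one edge of $P_L$ at a time, invoking Lemma~\ref{lem:capacity-change} at each step. First I would introduce intermediate paths interpolating between $P$ and $P'$: for $m = r+1, r, \ldots, 1$, let $P^{(m)}$ be the path obtained from $P$ by replacing the capacity of each edge $e_i$ with $\min_{i \le j \le r+1} c_j$ for every index $i$ with $m \le i \le r+1$, while keeping the capacities of $e_1,\ldots,e_{m-1}$ equal to their original values. Since $\min_{r+1 \le j \le r+1} c_j = c_{r+1}$, no edge is altered when $m = r+1$, so $P^{(r+1)} = P$; and $P^{(1)}$ is exactly the path $P'$ of the statement. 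Hence it suffices to prove $\Theta_L(P^{(m+1)},x) = \Theta_L(P^{(m)},x)$ for each $m = r, r-1, \ldots, 1$ and then chain these equalities together to get $\Theta_L(P,x) = \Theta_L(P^{(r+1)},x) = \cdots = \Theta_L(P^{(1)},x) = \Theta_L(P',x)$.

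For the inductive step, observe that $P^{(m+1)}$ and $P^{(m)}$ agree on every edge except possibly $e_m$: in $P^{(m+1)}$ the edge $e_m$ carries its original capacity $c_m$, whereas in $P^{(m)}$ it carries $\min_{m \le j \le r+1} c_j = \min\bigl(c_m,\ \min_{m+1 \le j \le r+1} c_j\bigr)$, and the quantity $\min_{m+1 \le j \le r+1} c_j$ is precisely the capacity of the next edge $e_{m+1}$ in $P^{(m+1)}$; call it $\tilde c$. If $c_m \le \tilde c$, then $\min(c_m,\tilde c) = c_m$, so $P^{(m)} = P^{(m+1)}$ and there is nothing to prove. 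Otherwise $c_m > \tilde c$, and in the path $P^{(m+1)}$ the edge $e_m$ (whose left endpoint $x_{m}$ lies in $P_L$, since $m \le r$) has capacity strictly larger than the following edge $e_{m+1}$. This is exactly the hypothesis of Lemma~\ref{lem:capacity-change} applied to $P^{(m+1)}$, which yields that the left evacuation time is unchanged when the capacity of $e_m$ is lowered to $\tilde c$ — and the path so obtained is precisely $P^{(m)}$. So $\Theta_L(P^{(m+1)},x) = \Theta_L(P^{(m)},x)$ in both cases.

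The only point requiring a little care is checking that Lemma~\ref{lem:capacity-change} really is applicable at each step. Its hypothesis asks only for the existence of a vertex $x_j \in P_L$ at which the (current) capacity of $e_j$ exceeds that of $e_{j+1}$, which is exactly what the case $c_m > \tilde c$ provides; the lemma places no constraint on capacities elsewhere, so the fact that edges to the right of $e_m$ have already been modified in $P^{(m+1)}$ is harmless. One should also note that the induction never needs to modify $e_{r+1}$ — it already equals $\min_{r+1 \le j \le r+1} c_j$ — which is the one edge whose far endpoint might lie beyond the sink $x$; thus Lemma~\ref{lem:capacity-change} is never invoked in a degenerate configuration. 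I do not expect any genuine obstacle here: all the analytic content sits in Lemma~\ref{lem:capacity-change}, which is already proven, and what remains is just the bookkeeping of the telescoping argument above.
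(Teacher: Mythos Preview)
Your proposal is correct and is exactly the argument the paper gives: the paper's entire proof of this corollary is the single sentence ``Starting with the rightmost edge and walking left, applying Lemma~\ref{lem:capacity-change} to each edge yields \ldots,'' and you have simply written out this telescoping induction in detail. One tiny slip: $x_m$ is the \emph{right} endpoint of $e_m$, not the left, but this is irrelevant since what you need (and correctly verify) is $x_m\in P_L$, i.e.\ $m\le r$.
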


\begin{lemma}
Let $x$ be the location of the sink and $x_r$ the rightmost vertex to the left of $x$. 
Set $c'_t = \min_{t \le j \le r+1} c_j$ and  $W_t=\sum_{0\leq j\leq t}w_{j}$. Then
\begin{equation}
\Theta_L(P,x) = 
\max_{0 \le t \le r}
\left(\left((x-x_{t})\tau+\left\lceil \frac{W_t} {c'_{t+1}}\right\rceil -1\right)\right)
\end{equation}
\end{lemma}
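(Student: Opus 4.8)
The plan is to use Corollary~\ref{cor:eq-problem} to reduce to the path $P'$ with monotone non-decreasing capacities $c'_i = \min_{i \le j \le r+1} c_j$, and then to prove the formula directly on $P'$. On $P'$ the capacities satisfy $c'_0 \le c'_1 \le \cdots \le c'_{r+1}$, so no congestion of the first type (a wave hitting a narrower edge) can ever occur: once a group of $c'_t$ people leaves $x_t$ together they can always pass through every subsequent edge as a single wave. This means the only congestion is of the second type, namely people piling up at a vertex waiting for earlier arrivals to clear. First I would set up the labelling convention already fixed in the text (people $1,\dots,W_r$ from left to right, FIFO queues) and, for each index $t$, track the departure time $D_t$ of the \emph{last} person who started on or to the left of $x_t$ as that person leaves $x_t$. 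The claim to establish is the recurrence $D_t = \max\!\bigl(D_{t-1} + (x_t - x_{t-1})\tau,\ A_t\bigr) + \lceil W_t/c'_{t+1}\rceil - 1$ adjusted appropriately, where $A_t$ is the arrival time at $x_t$ of the last left-of-$x_t$ person; unrolling this recurrence gives the stated max.

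The key steps, in order, are: (1) invoke Corollary~\ref{cor:eq-problem} to replace $P$ by $P'$; (2) observe monotonicity of the $c'_i$ and argue that, because of it, all $W_t$ people who must cross $e_{t+1}$ do so in exactly $\lceil W_t/c'_{t+1}\rceil$ consecutive waves, the last of which departs $x_t$ at time $($departure time of the first of these $W_t$ people from $x_t) + \lceil W_t/c'_{t+1}\rceil - 1$; (3) show by induction on $t$ that the last of the first-$W_t$ people leaves $x_t$ no earlier than it would if all $W_t$ people had been sitting at $x_t$ from time $0$ — i.e. the ``compression'' to a point only speeds things up or keeps them equal — and conversely that some single source vertex $x_i$ ($i \le t$) is the true bottleneck, giving a lower bound of $(x-x_i)\tau + \lceil W_i/c'_{i+1}\rceil - 1$; (4) combine the upper and lower bounds to get $\Theta_L(P',x) = \max_{0\le t\le r}\bigl((x-x_t)\tau + \lceil W_t/c'_{t+1}\rceil - 1\bigr)$; (5) rewrite $c'_{t+1} = \min_{t+1 \le j \le r+1} c_j$ and note $x = x_{r+1}$ if $x$ is a vertex (or $x_r < x \le x_{r+1}$ in general), matching the statement of Theorem~\ref{thm:main form} with $k = r$.

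The main obstacle is step~(3), the two-sided bound. The upper bound direction requires arguing that once we are on the monotone path, the evacuation behaves as if people were released in a clean stream: I expect to prove that for every $t$, the last-of-$W_t$ person leaves $x_t$ at exactly $\max_{i \le t}\bigl(a_i + \lceil (W_t - W_{i-1}) / c'_{t+1}\rceil \text{-type terms}\bigr)$ — more precisely, that the departure schedule from $x_t$ is a single uninterrupted block of waves of width $c'_{t+1}$ whose start is governed by the worst upstream vertex. Making this induction go through cleanly is delicate because one has to rule out the possibility that a ``gap'' in the wave stream at some $x_s$ ($s<t$) could let downstream people catch up and change the count; monotonicity of capacities is exactly what forbids this, so the heart of the argument is formalizing ``no wave ever shrinks, hence no queue ever benefits from a downstream slowdown''. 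Once that structural fact is in hand, the arithmetic of counting $\lceil W_t/c'_{t+1}\rceil$ waves and taking the outer maximum is routine.
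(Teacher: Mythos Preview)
Your overall architecture matches the paper: invoke Corollary~\ref{cor:eq-problem} to pass to the monotone-capacity path $P'$, get the lower bound by counting waves through each $x_t$, and then argue that some single vertex achieves equality. The lower bound direction (your ``compression'' inequality) is exactly the paper's argument.

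The upper bound is where your plan diverges, and where it has a genuine gap. Your proposed recurrence $D_t = \max(D_{t-1} + (x_t-x_{t-1})\tau,\ A_t) + \lceil W_t/c'_{t+1}\rceil - 1$ is not correct, and the structural claim underpinning it---that departures from $x_t$ form a single uninterrupted block of full waves---is false in general. A small example: $w_0=10$, $w_1=1$, $c'_1=c'_2=5$, $x_1-x_0=10$, $\tau=1$. The lone person at $x_1$ leaves at time $0$; then nothing leaves $x_1$ until time $10$ when the first wave from $x_0$ arrives. There is a gap, and your recurrence would give $D_1 = 11 + \lceil 11/5\rceil - 1 = 13$, whereas the last person actually leaves $x_1$ at time $11$. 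Monotonicity of the $c'_i$ does \emph{not} forbid gaps; it only guarantees that an arriving wave never has to split.

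The paper sidesteps this entirely. Instead of tracking every $D_t$, it picks a single distinguished vertex: let $x_t$ be the \emph{last} vertex at which the final person $L$ experiences any waiting. Downstream of $x_t$, $L$ travels at full speed, contributing exactly $(x-x_t)\tau$. At $x_t$ itself, the paper observes that because $c'_t \le c'_{t+1}$, once the queue at $x_t$ is empty it stays empty forever (any incoming wave fits through $e_{t+1}$ immediately). Since $L$ does wait at $x_t$, the queue was never empty before $L$ departed, so every one of the first $T_t$ waves out of $x_t$ is full. That forces $T_t = \lceil W_t/c'_{t+1}\rceil - 1$ exactly, and the upper bound follows. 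This ``once empty, always empty'' observation is the missing idea in your sketch; it localises the block structure to the one vertex where it is actually guaranteed, rather than trying to maintain it inductively everywhere.
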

\begin{proof}
From Corollary \ref{cor:eq-problem} we can assume that we are calculating $\Theta_L(P',x)$ that has
capacities 
$c'_t = \min_{t \le j \le r+1} c_j$ instead of the original $c_t.$

Recall the  labelled numbering of the people.  The very last person on $x_0$ is $L=W_r$.
$\Theta_L(P,x)$ is  the time taken by $L$   to reach $x$.

   For fixed $t$, set $T_t$ to be the time at which $L$ leaves $x_t$ (entering  $e_{t+1}$). After leaving $x_t,$ $L$ needs at least $(x-x_t)\tau$ more time to travel to $x$,  so $\Theta_L(P,x) \ge  T_t + (x-x_t)\tau	.$

    Each wave leaving $x_t$ has  size at most $c'_{t+1}$.  Since there are
    $W_t$ people  on nodes $0,\ldots,x_t$,  at least $\lceil W_t/c'_{t+1}\rceil$ waves leave $x_t$ with $L$ being in the last wave.  Only one wave leaves $j$ at each time step, so  $T_t \ge  \lceil W_t/c'_{t+1}\rceil-1$. Thus, the time at which $x$ reaches $x$ is at least 
    $$ \lceil \frac{W_t}{c'_{j+1}}\rceil-1 + (x-x_t)\tau.$$
 Since this is true for every $t$ we have proven one direction of the lemma, i.e.,
 $$
 \Theta_L(P,x) \ge  
\max_{0 \le t \le r}
\left(\left((x-x_{t})\tau+\left\lceil \frac{W_t} {c'_{t+1}}\right\rceil -1\right)\right).
$$
 
 To prove the other direction let $x_t$ be the last vertex at which $L$ waits before reaching $x.$  If $L$ never waits, set $x_t=x_0,$ the original location of $L.$   Note that, since $L$ never waits after $x_t$, it takes $L$ exactly $(x-x_t)\tau$ time to reach $x$ from $x_t.$
 
 We now claim that, at every time step, $s=0,\ldots T_t$, except possibly the last one, exactly $c'_{t+1}$ items pass through  $x_i$,  This is obviously true in the case $t=0.$ Suppose this was not true if  $t >0$.

Suppose there was ever a time $s$ at which no people were waiting at $x_t$.
Since $c'_{t} \le c'_{t+1}$, this implies any wave arriving at $x_t$ from  $e_t$ will pass right through to $e_{t+1}$ without waiting, leaving no people waiting at $x_t$ at time $s+1$.  This in turn implies that if there were ever a time $s$ at which no people were waiting at $x_t$, no one would ever wait at $x_t$ at  a later time step.  Recall that $x_t$ was chosen as the last vertex at which $L$ waits.  Thus at  {\em every} time $s,$  $0 \le s < T_t$, someone must be waiting on $x_t.$ People can  only wait on $x_t$ if the wave leaving $x_t$ at time $s$ is full, i.e., contains $c'_{t+1}$ items.  We have therefore seen that or all times $s,$ $0 \le s < T_t$, $c'_{t+1}$ items leave $x_t$
and therefore  
$T_t = \lceil W_t/c'_{t+1}\rceil -1$ so
$$\Theta_L(P',x) = \lceil W_t/c'_{i+1}\rceil-1 + (x-x_i)\tau.$$
 proving the other direction  of the lemma.   
\end{proof}

\medskip

By symmetry a  similar expression exists for the evacuation time to the right of sink $x$.  We have thus just 
proven Theorem  \ref {thm:main form} and, by extension, Claim \ref {claim:theta1PXalg}.

\section{The 1-sink Evacuation Algorithm}
\label{sec:1_sink_evac}

In what follows we often implicitly assume that $w_0>0$ and $w_n>0$.   If this is not the case, then   preprocess in linear time by clipping off the leftmost and rightmost vertices that have zero weight.

Note that if sink $x$ is required to be one of the original $x_i$  then $\Theta^1(P)$ is well defined as the minimum over $n+1$ values.   But, if $x \in P$ is a non-vertex edge point, then it is not a-priori obvious that such a minimum exists.  We now develop some basic properties that ensure its existence and lead to the efficient calculation algorithm claimed by Theorem \ref {thm:1sink}.

\begin{figure}[t]
\begin{center}
\includegraphics[width=2.5in]{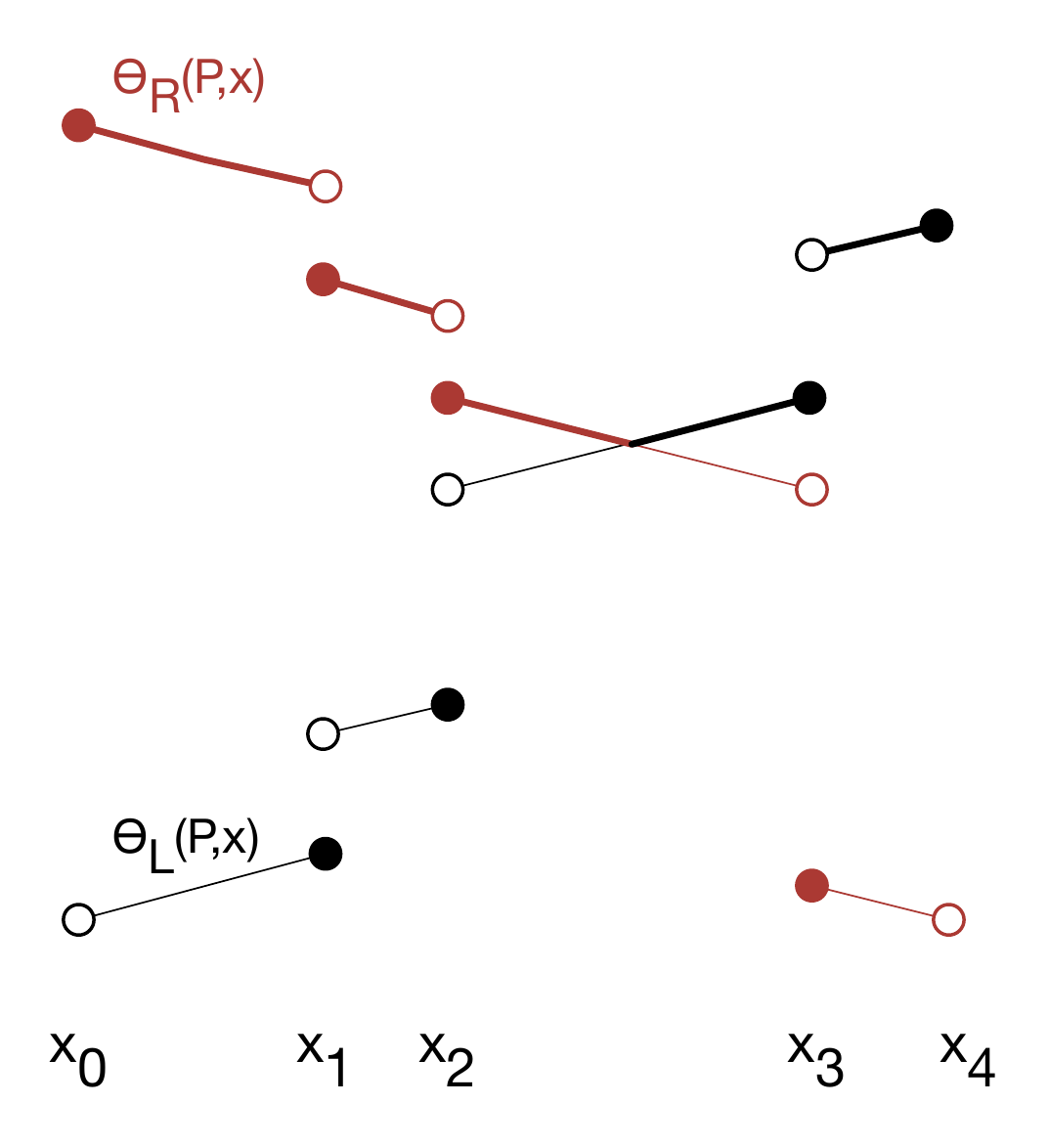}
\end{center}
\caption{An illustration of Claim \ref{claim:mono}. $\Theta_L(P,x)$ is a monotonically increasing  piecewise linear function in $x$;  $\Theta_R(P,x)$ is a monotonically decreasing  piecewise linear function in $x$.  $\Theta(P,x)$ (the heavy lines) is the max of the two functions and is unimodal.  $\Theta(P) = \min_x \Theta(P,x)$ is the unique value at which $\Theta(P)$ achieves its minimum.}
\label{fig:Theta_2_a}
\end{figure}

It is straightforward to see that
\begin{Claim}＼　
\begin{itemize}
\item $\Theta_L(P,x_0) =0 =  \Theta_R(P,x_n)$
\item $\Theta_L(P,x)$ is monotonically increasing and $\Theta_L(P,x)$ is monotonically decreasing for  $x \in P$.
\end{itemize}
\end{Claim}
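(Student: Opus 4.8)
The plan is to establish the two bullets directly from the explicit formula of Theorem~\ref{thm:main form} (specialized via Corollary~\ref{cor:eq-problem}), treating the left and right statements symmetrically and focusing only on the left one. For the first bullet, observe that when $x = x_0$ there are no vertices $x_i < x$, so the maximum in \eqref{eq:left-evac-gc} is over the empty set and hence $\Theta_L(P,x_0)=0$; symmetrically $\Theta_R(P,x_n)=0$ since no vertex lies to the right of $x_n$. (Here I am using the standing assumption introduced just above — after clipping, $w_0>0$ and $w_n>0$ — only to make the later monotonicity strict where needed; the equality itself needs no hypothesis beyond the empty-max convention.)

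For the second bullet, I would fix the combinatorial ``slot'' of the sink and argue piecewise. Recall that $k$ is defined by $x_k < x \le x_{k+1}$; as $x$ ranges over the half-open interval $(x_k,x_{k+1}]$, the index $k$ is constant, so each inner term in \eqref{eq:left-evac-gc},
\[
f_i(x) \;=\; (x-x_i)\tau \;+\; \left\lceil \frac{\sum_{0\le j\le i} w_j}{\min_{i+1\le j\le k+1} c_j}\right\rceil - 1 ,
\]
is an affine function of $x$ with slope $\tau>0$ plus a constant that does not depend on $x$ within this interval. A pointwise maximum of finitely many nondecreasing affine functions is nondecreasing, so $\Theta_L(P,x)$ is nondecreasing on $(x_k,x_{k+1}]$. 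To get monotonicity across the whole path I then need to check continuity / the correct inequality at the breakpoints $x=x_{k+1}$: when $x$ crosses from the interval indexed by $k$ into the one indexed by $k+1$, the vertex $x_{k+1}$ newly enters the set $\{x_i<x\}$ and the capacity windows $\min_{i+1\le j\le k+1} c_j$ shrink to $\min_{i+1\le j\le k+2} c_j$. Shrinking the window can only decrease the denominator's... wait, it can only decrease the $\min$, hence increase each ceiling term; and adding a new term $f_{k+1}$ to the max can only increase the max. Hence $\Theta_L(P,x)$ does not decrease as $x$ passes a vertex either, and combined with the affine-piece argument it is monotonically nondecreasing on all of $P$. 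The symmetric argument — windows $\min_{k+1\le j\le n} c_j$ grow as $x$ moves right, and terms drop out of the max — shows $\Theta_R(P,x)$ is monotonically nonincreasing.

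The only slightly delicate point, and the one I would treat most carefully, is the behavior exactly at the vertices, since $\Theta_L(P,\cdot)$ is only piecewise linear and a priori could jump there; I want to confirm it jumps (if at all) in the ``correct'' direction, which the window-shrinking observation above handles. Everything else is a routine ``max of monotone functions is monotone'' argument, so I do not expect a real obstacle. If strict monotonicity is wanted (it is implicitly used later for the unimodality picture in Figure~\ref{fig:Theta_2_a}), it follows from $w_0>0$: the term $f_0(x)=(x-x_0)\tau + \lceil w_0/c'_1\rceil - 1$ is strictly increasing and, being associated with the leftmost vertex, is always present in the max for every $x>x_0$, forcing $\Theta_L(P,x)$ to be strictly increasing; symmetrically for $\Theta_R$ using $w_n>0$.
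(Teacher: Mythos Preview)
Your argument is correct, and indeed more than the paper itself provides: the paper gives no proof of this claim at all, merely prefacing it with ``It is straightforward to see that.'' Your formula-based derivation via Theorem~\ref{thm:main form} is a valid way to make the statement rigorous, and the breakpoint analysis (new index $i=k+1$ enters the max; each capacity window $\min_{i+1\le j\le k+1} c_j$ is replaced by the smaller-or-equal $\min_{i+1\le j\le k+2} c_j$) is exactly the right observation.

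One small quibble on your final paragraph about strict monotonicity: the sentence ``$f_0$ is strictly increasing and always present in the max, forcing $\Theta_L(P,x)$ to be strictly increasing'' is not sound as written---a pointwise maximum that contains one strictly increasing term can still be locally constant if some other term dominates. What actually secures strictness here is the stronger fact you already established: on each interval $(x_k,x_{k+1}]$, \emph{every} $f_i$ is affine with the \emph{same} slope $\tau>0$, so their maximum is itself affine with slope $\tau$ and hence strictly increasing there. Combined with your breakpoint inequality this yields strict monotonicity globally, and no appeal to $w_0>0$ is needed for that step (only $\tau>0$). The hypothesis $w_0>0$ is what ensures $\Theta_L(P,x)>0$ for $x>x_0$, not the strictness of the increase.
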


Consider  $x_i < x < x_{i+1}$.  $\Theta_{L}(P,x_{i+1})$ is the time at which the wave containing the final item from $x_0$ reaches $x_{i+1}$.  Since waves travel at speed $1 /\tau$,   that final item from $x_0$ had reached location $x$,  at time 
$\Theta_{L}(P,x_{i+1}) - \tau (x_{i+1} -x)$.   This justifies (Figure \ref{fig:Theta_2_a})
\begin{Claim}＼　
\label{claim:mono}
\begin{itemize}
\item If $x_i < x < x_{i+1}$ then
\begin{eqnarray}
\Theta_{L}(P,x)   &=& \Theta_{L}(P,x_{i+1}) - \tau (x_{i+1} -x) \label{eq:Lline}\\
\Theta_{R}(P,x)   &=& \Theta_{R}(P,x_{i}) - \tau (x-x_i)  \label{eq:Rline}
\end{eqnarray}
\item $\Theta_{L}(P,x)$ is a monotonically increasing piecewise linear function with slope $\tau$ and  (possible) discontinuities at $x=x_i$ at which it is left continuous.
\item $\Theta_{R}(P,x)$ is a monotonically decreasing  piecewise linear function with slope $-\tau$ and  (possible) discontinuities at $x=x_i$ at which it is right continuous.
\end{itemize}
\end{Claim}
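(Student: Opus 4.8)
The plan is to read the claim off directly from the closed form in Theorem~\ref{thm:main form}. The key observation is that in (\ref{eq:left-evac-gc}), once the index $k$ (defined by $x_k < x \le x_{k+1}$) has been pinned down, the only dependence of the right-hand side on $x$ is through the explicit summands $(x-x_\ell)\tau$; the value of $k$, the index set $\{x_\ell : x_\ell < x\} = \{x_0,\dots,x_k\}$, and every ceiling term are all constant. So I would fix $i$ and restrict to $x \in (x_i, x_{i+1}]$, which is exactly the set of sink positions for which $k = i$, and rewrite (\ref{eq:left-evac-gc}) as
\[
\Theta_{L}(P,x) = x\tau + h_i, \qquad h_i := \max_{0 \le \ell \le i}\bigl( \lceil W_\ell/\min_{\ell+1 \le j \le i+1} c_j \rceil - 1 - x_\ell \tau \bigr),
\]
a function of $x$ that is affine with slope $\tau$, where $h_i$ is a constant depending only on $i$ (and the fixed input). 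Ranging over all $i$, this already gives the ``piecewise linear, slope $\tau$'' part of the claim.

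Next I would extract the two remaining pieces. Since $x_{i+1}$ itself lies in $(x_i,x_{i+1}]$, evaluating the same formula at it gives $\Theta_{L}(P,x_{i+1}) = x_{i+1}\tau + h_i$; subtracting yields $\Theta_{L}(P,x) = \Theta_{L}(P,x_{i+1}) - \tau(x_{i+1}-x)$ for $x \in (x_i,x_{i+1})$, which is (\ref{eq:Lline}). For continuity, note that the vertex $x_i$ belongs to the \emph{previous} interval $(x_{i-1},x_i]$, on which $\Theta_{L}(P,x) = x\tau + h_{i-1}$; hence $\Theta_{L}(P,x_i) = x_i\tau + h_{i-1} = \lim_{x \uparrow x_i}\Theta_{L}(P,x)$, i.e.\ $\Theta_L$ is left continuous at $x_i$, whereas $\lim_{x \downarrow x_i}\Theta_{L}(P,x) = x_i\tau + h_i$ need not agree. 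Because $\min_{\ell+1 \le j \le i+1} c_j \le \min_{\ell+1 \le j \le i} c_j$ for every $\ell \le i-1$, one gets $h_i \ge h_{i-1}$, so any discontinuity at $x_i$ is an upward jump of size $h_i - h_{i-1} \ge 0$; together with the positive slope $\tau$ inside each piece this re-derives that $\Theta_L$ is monotonically increasing.

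For $\Theta_{R}$ I would invoke the left--right mirror symmetry already used to derive Theorem~\ref{thm:main form}: reflecting the path interchanges $\Theta_{L}$ and $\Theta_{R}$, swaps the roles of $x_0$ and $x_n$, and turns each half-open interval $(x_i,x_{i+1}]$ into $[x_i,x_{i+1})$. This explains why $\Theta_{R}(P,\cdot)$ has slope $-\tau$, is anchored at the \emph{left} endpoint of each linear piece --- yielding $\Theta_{R}(P,x) = \Theta_{R}(P,x_i) - \tau(x-x_i)$ on $(x_i,x_{i+1})$, i.e.\ (\ref{eq:Rline}) --- and is right continuous (rather than left continuous) at the $x_i$; correspondingly one should read $k$ in (\ref{eq:right-evac-gc}) via the mirror convention $x_k \le x < x_{k+1}$, which is also the physically correct one, since the edge $e_i$ lying to the left of a sink placed at $x_i$ is irrelevant to evacuating the right side.

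I expect the only delicate point to be this bookkeeping at the vertices: that the half-open interval on which a fixed value of $k$ (hence a fixed constant $h_i$) is in force is $(x_i,x_{i+1}]$ for $\Theta_L$ and $[x_i,x_{i+1})$ for $\Theta_R$, since this is precisely what forces the linear piece to be anchored at $x_{i+1}$ (resp.\ $x_i$) and fixes the direction of one-sided continuity. An equivalent, more kinematic route --- the one sketched informally in the text preceding the claim --- sidesteps it: because no vertex lies strictly between $x_i$ and $x_{i+1}$, no congestion can occur there, so the last item from $x_0$ passes any point $x \in (x_i,x_{i+1})$ exactly $\tau(x_{i+1}-x)$ time units before it reaches $x_{i+1}$, and the evacuation dynamics up to $x$ are the same whether the sink sits at $x$ or at $x_{i+1}$; this gives (\ref{eq:Lline}) directly, with the structural consequences following as above.
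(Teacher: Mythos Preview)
Your argument is correct. The paper, however, does not prove this claim from the formula in Theorem~\ref{thm:main form}; it just gives the one-sentence kinematic justification appearing in the paragraph immediately before the claim (the last wave moves at speed $1/\tau$ through the vertex-free stretch $(x_i,x_{i+1})$, so it passes $x$ exactly $\tau(x_{i+1}-x)$ earlier than it reaches $x_{i+1}$), and leaves the remaining bullet points as evident consequences. Your primary route is genuinely different: you fix $k=i$ on $(x_i,x_{i+1}]$, isolate the single $x$-dependent summand $x\tau$ in (\ref{eq:left-evac-gc}), and read off the affine form, the left continuity, and the inequality $h_i\ge h_{i-1}$ (hence monotonicity) directly from the algebra. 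What this buys you over the paper's argument is a self-contained verification of monotonicity and of the sidedness of the continuity that does not appeal to the physical model, together with an explicit flag on the half-open-interval bookkeeping (and the mirrored convention $x_k\le x<x_{k+1}$ needed for $\Theta_R$) that the paper glosses over. You even note the paper's kinematic shortcut at the end of your write-up, so you have effectively covered both proofs.
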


\begin{figure*}[tp]
\begin{subfigure}[b]{1.5in}
\includegraphics[width=1.5in]{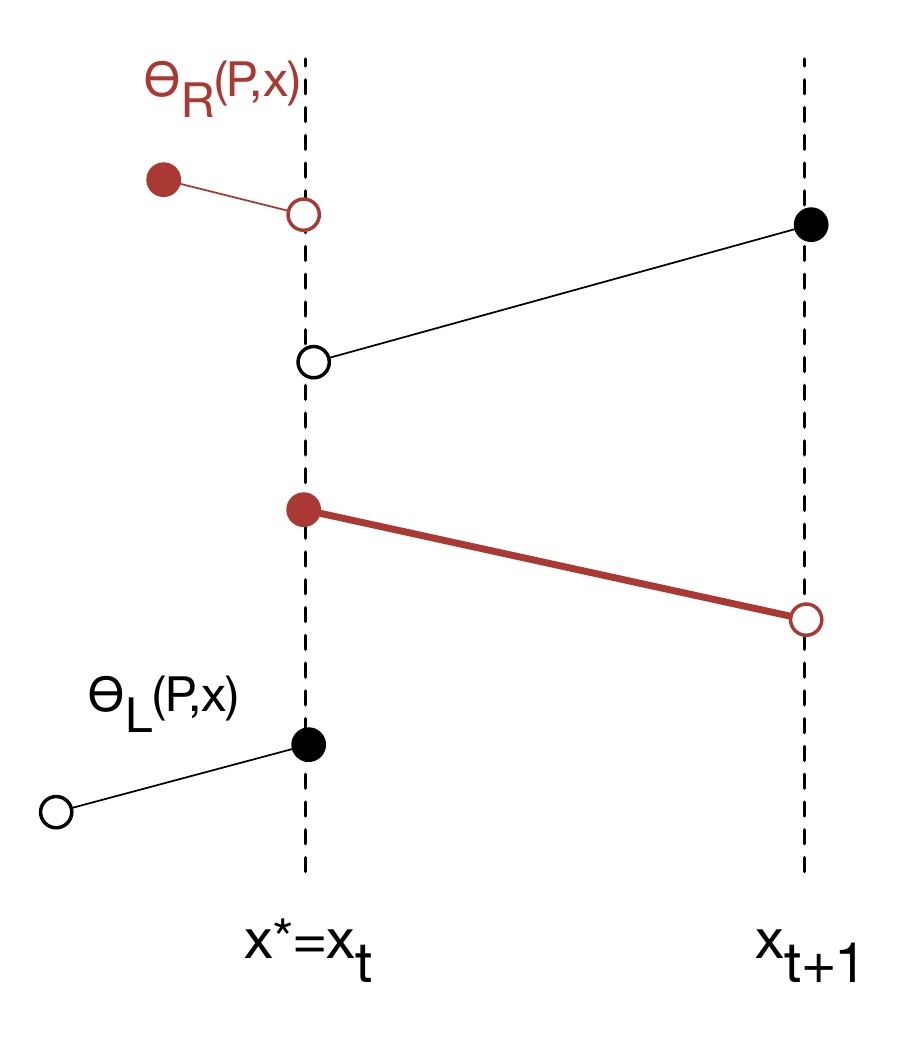}
\caption{\label{fig:x*:Case1}  Case 1}
\end{subfigure}
\hfill
\begin{subfigure}[b]{1.5in}
\includegraphics[width=1.5in]{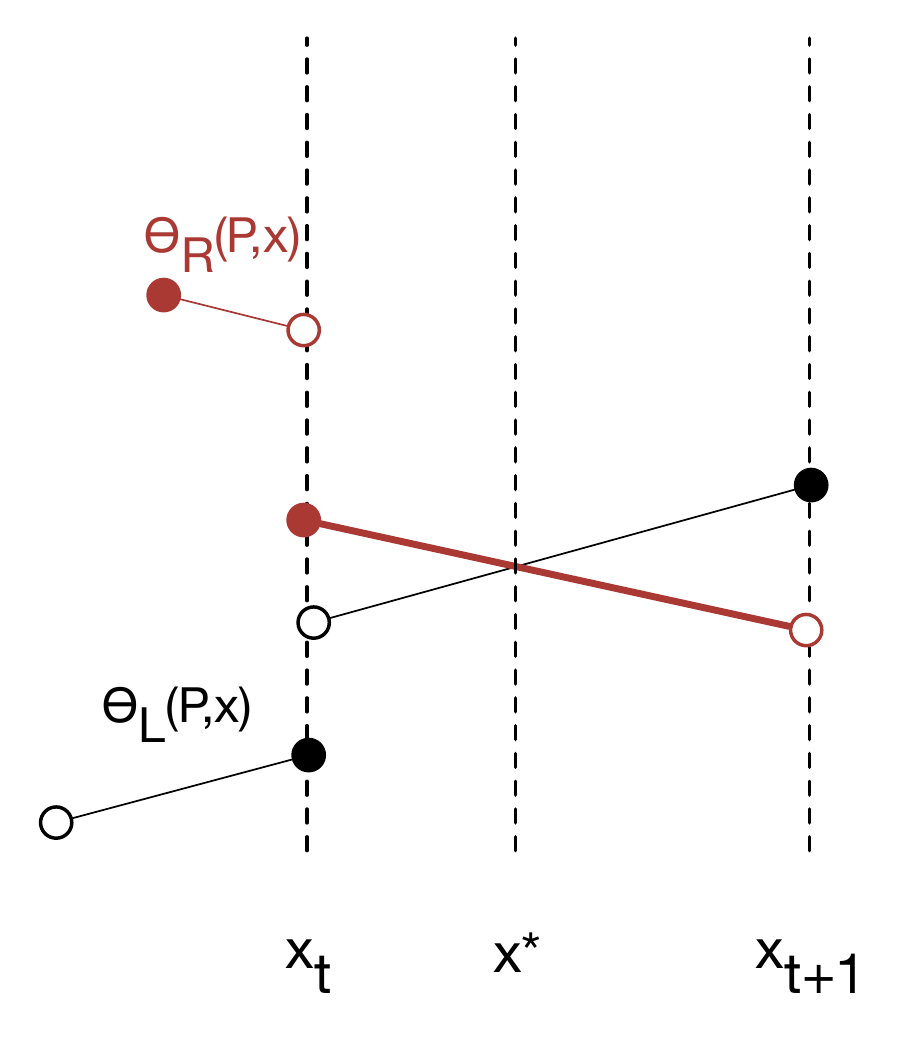}
\caption{\label{fig:x*:Case2} Case 2}
\end{subfigure}
\hfill
\begin{subfigure}[b]{1.5in}
\includegraphics[width=1.5in]{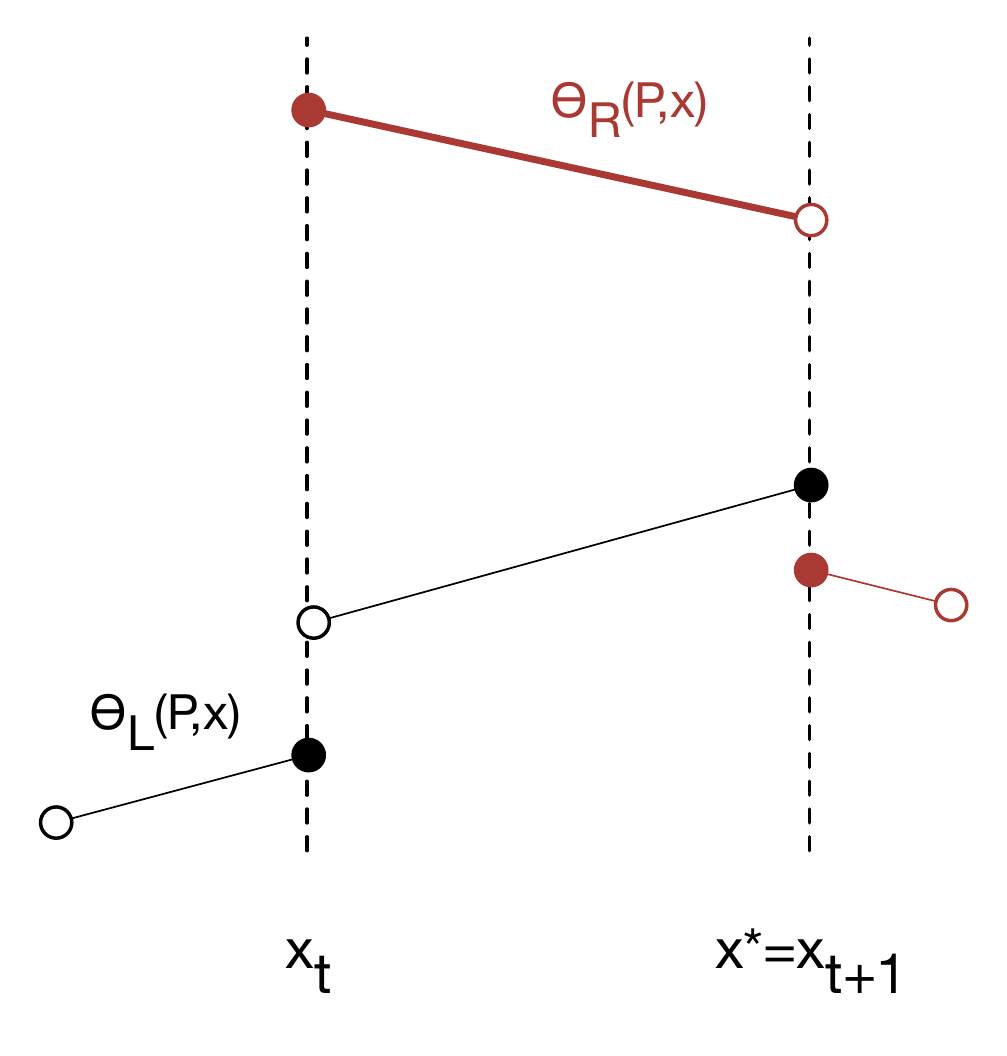}
\caption{\label{fig:x*:Case3} Case 3}
\end{subfigure}
\caption{\label{fig:2x*} The three cases from  Lemma \ref{lem: 1opt}'s proof.  $\Theta_L$ and $\Theta_R$ are piecewise linear functions with their (possible) discontinuities restricted to the $x_t$. $\Theta_L$ is left-continuous at every point and $\Theta_R$ is right-continuous at every point. By the definition of $t$, $\Theta_L(P,x_t) <  \Theta_R(P,x_t)$ and  $\Theta_L(P,x_{t+1}) \ge \Theta_R(P,x_{t+1})$. }
\end{figure*}

Combining the two previous claims  yields
\begin{Claim}
\label{claim:unimodality}
\item $\Theta^1(P,x)$ is a {\em strictly unimodal} function of $x$, i.e., it monotonically decreases, achieves its minimum at a unique value $x*$ and then monotonically increases.
\item Furthermore $x^*$ is the unique value for which
\begin{eqnarray*}
\forall x < x^*,  & \Theta(P,x)  = \Theta_R(P,x) >  \Theta_R(P,x)\\
\forall x > x^*,  & \Theta(P,x)  = \Theta_L(P,x) > \Theta_L(P,x)
\end{eqnarray*}
\end{Claim}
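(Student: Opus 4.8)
The plan is to derive the claim purely from the structure recorded in Claim~\ref{claim:mono} and the (unlabeled) claim preceding it. Those tell us that $\Theta_L(P,\cdot)$ is a monotonically increasing, left‑continuous, piecewise‑linear function with slope $\tau$ on every linear piece and $\Theta_L(P,x_0)=0$, and symmetrically that $\Theta_R(P,\cdot)$ is monotonically decreasing, right‑continuous, slope $-\tau$, with $\Theta_R(P,x_n)=0$. The first thing I would note is that having slope $\tau>0$ on every piece together with only upward jumps forces $\Theta_L(P,\cdot)$ to be \emph{strictly} increasing on $[x_0,x_n]$, and symmetrically $\Theta_R(P,\cdot)$ strictly decreasing. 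The second is that a left‑continuous function with upward jumps is lower semicontinuous, as is a right‑continuous function with downward jumps; hence $\Theta(P,\cdot)=\max\{\Theta_L(P,\cdot),\Theta_R(P,\cdot)\}$ is lower semicontinuous on the compact interval $[x_0,x_n]$ and therefore \emph{attains} its minimum, call it $m$, at some point $x^*$. Using $w_0,w_n>0$ (which we may assume after clipping), $\Theta_L(P,x_0)=0<\Theta_R(P,x_0)$ and $\Theta_R(P,x_n)=0<\Theta_L(P,x_n)$, so $x^*\notin\{x_0,x_n\}$.

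Next I would settle uniqueness of the minimizer and the dominance pattern by short contradiction arguments using only strict monotonicity. If $x_1^*<x_2^*$ both attained $m$, then any $x$ strictly between them would satisfy $\Theta_L(P,x)<\Theta_L(P,x_2^*)\le m$ and $\Theta_R(P,x)<\Theta_R(P,x_1^*)\le m$, so $\Theta(P,x)<m$, impossible; hence $x^*$ is unique. Likewise, if some $x'<x^*$ had $\Theta_L(P,x')\ge\Theta_R(P,x')$, then $\Theta(P,x')=\Theta_L(P,x')<\Theta_L(P,x^*)\le m$, impossible; so $\Theta_R(P,x)>\Theta_L(P,x)$, i.e. $\Theta(P,x)=\Theta_R(P,x)$, for every $x<x^*$, and symmetrically $\Theta(P,x)=\Theta_L(P,x)$ for every $x>x^*$. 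Since $\Theta_R(P,\cdot)$ is strictly decreasing and $\Theta_L(P,\cdot)$ strictly increasing, $\Theta(P,\cdot)$ strictly decreases on $[x_0,x^*)$ and strictly increases on $(x^*,x_n]$; that the value $m$ at $x^*$ is a \emph{strict} minimum follows because for $x<x^*$, picking any $y\in(x,x^*)$ gives $\Theta(P,x)=\Theta_R(P,x)>\Theta_R(P,y)=\Theta(P,y)\ge m$, and symmetrically on the right. This is exactly strict unimodality. Finally, $x^*$ is the unique point with the displayed two‑sided dominance: a candidate $x'<x^*$ would, for any $x\in(x',x^*)$, be forced to give both $\Theta_L(P,x)>\Theta_R(P,x)$ (its right‑side condition) and $\Theta_R(P,x)>\Theta_L(P,x)$ ($x^*$'s left‑side condition), a contradiction, and $x'>x^*$ is symmetric.

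The only genuinely delicate point, and the reason one cannot merely invoke the intermediate value theorem, is that $\Theta_L$ and $\Theta_R$ jump at the vertices, so the sign change of $\Theta_L(P,\cdot)-\Theta_R(P,\cdot)$ may be realized \emph{across} a jump rather than at an honest crossing; this is precisely what the semicontinuity observation (to guarantee the minimum is attained) and the strict‑monotonicity contradictions (which never use continuity) are designed to sidestep. For the algorithmic statements in the following sections one also wants $x^*$ explicitly, and there I would first locate the edge $(x_t,x_{t+1})$ across which $\Theta_L(P,\cdot)-\Theta_R(P,\cdot)$ changes sign — well defined since $\Theta_L(P,x_i)-\Theta_R(P,x_i)$ is strictly increasing in $i$ and runs from negative at $i=0$ to positive at $i=n$ — and then split into the three cases of Figure~\ref{fig:2x*} depending on whether the two linear pieces cross inside $(x_t,x_{t+1})$ or the sign change is forced at the vertex $x_{t+1}$ (or at $x_t$) by a jump; in each case $x^*$ and $\Theta^1(P)$ can be read off in $O(1)$ time from the precomputed prefix sums and capacity minima.
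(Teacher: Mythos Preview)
Your argument is correct and is essentially what the paper intends: the paper simply writes ``Combining the two previous claims yields'' and gives no further details, so you are filling in exactly the combination of the monotonicity and piecewise-linearity facts from Claim~\ref{claim:mono} and the preceding claim. Your lower-semicontinuity observation to guarantee that the minimum is actually attained, and your strict-monotonicity contradictions for uniqueness and the dominance pattern, are the right way to make the paper's one-line justification rigorous in the presence of the jump discontinuities.
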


It is now straightforward to derive
\begin{lemma}
\label{lem: 1opt}
Let $t= \max \{ i \,:\, \Theta_L(P,x_i) <  \Theta_R(P,x_i)\}$.
Set 
\begin{eqnarray*}
\Theta^+_L(P,x_t) &=& \lim_{x\downarrow x_t} \Theta_L(P,x) = \Theta_{L}(P,x_{t+1}) - \tau (x_{t+1} -x_t)\\
\Theta^-_R(P,x_{t+1}) &=& \lim_{x\uparrow x_{t+1}} \Theta_R(P,x) = \Theta_{R}(P,x_t) - \tau (x_{t+1}-x_t)
\end{eqnarray*}
Then $\Theta^1(P) = \Theta(P,x^*)$ where
$$\small
x^* =
\left\{
\begin{array}{ll}
x_t  & \mbox{if $\Theta^+_L(P,x_t) \ge \Theta_R(P,x_t)$}\\
\frac 1 {2 \tau} (\Theta_L(P,x_{t+1}) - \Theta_R(P,x_{t}) - \tau(x_{t+1}-x_t)) & \mbox{if $\Theta^+_L(P,x_t) <  \Theta_R(P,x_t)$ and  $\Theta_L(P,x_{t+1}) > \Theta^-_R(P,x_{t+1})$    }\\
x_{t+1}  & \mbox{if   $\Theta_L(P,x_{t+1}) \le \Theta^-_R(P,x_{t+1})$    }
\end{array}
\right.
$$
\end{lemma}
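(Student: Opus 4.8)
The plan is to lean entirely on the structure established in Claims~\ref{claim:mono} and~\ref{claim:unimodality}: on each edge $[x_i,x_{i+1}]$ the functions $\Theta_L(P,\cdot)$ and $\Theta_R(P,\cdot)$ are affine with slopes $+\tau$ and $-\tau$, their only (possible) discontinuities sit at the vertices $x_i$ (with $\Theta_L$ left-continuous and $\Theta_R$ right-continuous there), and $\Theta^1(P,\cdot)=\max\{\Theta_L,\Theta_R\}$ is strictly unimodal with a unique minimizer $x^*$ at which the ``active'' branch switches from $\Theta_R$ to $\Theta_L$. Given this, the statement reduces to two things: first, pin down the single edge $[x_t,x_{t+1}]$ that must contain $x^*$; second, minimize the maximum of two affine pieces on that closed edge, taking care that jumps can occur at its two endpoints.

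First I would check that $t=\max\{i:\Theta_L(P,x_i)<\Theta_R(P,x_i)\}$ is well-defined with $0\le t\le n-1$. After the preprocessing that clips leading and trailing zero-weight vertices we may assume $w_0>0$ and $w_n>0$ (the case $n=0$ being trivial), so $\Theta_L(P,x_0)=0<\Theta_R(P,x_0)$ and $\Theta_R(P,x_n)=0<\Theta_L(P,x_n)$; hence $0$ belongs to the index set and $n$ does not, so $x_{t+1}$ exists. Next I would establish $x_t\le x^*\le x_{t+1}$ using Claim~\ref{claim:unimodality}: if $x^*<x_t$ then $\Theta_L(P,x_t)>\Theta_R(P,x_t)$, contradicting the choice of $t$; if $x^*>x_{t+1}$ then $\Theta_R(P,x_{t+1})>\Theta_L(P,x_{t+1})$, contradicting the maximality of $t$. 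Thus $x^*$ is exactly the minimizer of $\Theta^1(P,\cdot)$ over $[x_t,x_{t+1}]$, and by strict unimodality any point of $[x_t,x_{t+1}]$ that attains the minimum over that interval must equal $x^*$.

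It then remains to minimize $\max\{\Theta_L,\Theta_R\}$ on $[x_t,x_{t+1}]$. On the open interval write $L(x)=\Theta_L(P,x_{t+1})-\tau(x_{t+1}-x)$ and $R(x)=\Theta_R(P,x_t)-\tau(x-x_t)$ (Claim~\ref{claim:mono}); at the endpoints $\Theta_L(P,x_t)\le L(x_t)=\Theta^+_L(P,x_t)$, $\Theta_R(P,x_t)=R(x_t)$, $\Theta_L(P,x_{t+1})=L(x_{t+1})$, $\Theta_R(P,x_{t+1})\le R(x_{t+1})=\Theta^-_R(P,x_{t+1})$, and moreover $\Theta_L(P,x_t)<\Theta_R(P,x_t)$ while $\Theta_L(P,x_{t+1})\ge\Theta_R(P,x_{t+1})$. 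Since $L$ increases and $R$ decreases on the interval, I would split into the three cases of the statement. If $\Theta^+_L(P,x_t)\ge\Theta_R(P,x_t)$ then $L>R$ throughout the open interval, so $\Theta^1(P,x)=L(x)>L(x_t)\ge\Theta_R(P,x_t)=\Theta^1(P,x_t)$ for all $x\in(x_t,x_{t+1}]$; hence $x_t$ uniquely minimizes over the interval, forcing $x^*=x_t$. Symmetrically, if $\Theta_L(P,x_{t+1})\le\Theta^-_R(P,x_{t+1})$ then $R>L$ throughout the open interval and every such $x$ has $\Theta^1(P,x)>\Theta_L(P,x_{t+1})=\Theta^1(P,x_{t+1})$, forcing $x^*=x_{t+1}$. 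In the remaining case $L(x_t)<R(x_t)$ and $L(x_{t+1})>R(x_{t+1})$, so the two lines cross at a unique interior point $\bar x$; to its left the max is the decreasing $R$ and to its right the increasing $L$, so $\bar x$ minimizes $\Theta^1(P,\cdot)$ on the interval (and strictly beats both endpoint values $\Theta_R(P,x_t)$ and $\Theta_L(P,x_{t+1})$), and solving the linear equation $L(\bar x)=R(\bar x)$ yields the stated closed form for $x^*$.

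The main obstacle I expect is purely the bookkeeping around one-sided limits and discontinuities: because $\Theta_L$ and $\Theta_R$ need not be continuous at the vertices, every ``crossing'' argument has to be phrased with $\Theta^+_L(P,x_t)$ and $\Theta^-_R(P,x_{t+1})$ rather than with the vertex values, and Cases~1 and~3 are exactly the degenerate situations in which the geometric crossing point of the two affine pieces is pushed off the open interval and onto a jump at an endpoint. Once the correct one-sided quantities are in hand, the affine formulas, the unimodality, and the single linear solve are all routine given the earlier claims, so the argument is short.
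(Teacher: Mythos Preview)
Your proposal is correct and follows essentially the same approach as the paper: both arguments localize $x^*$ to the edge $[x_t,x_{t+1}]$ via the unimodality claim and then split into the same three cases according to whether the affine pieces $L$ and $R$ already satisfy $L\ge R$ at the left endpoint, $L\le R$ at the right endpoint, or cross in the interior. Your write-up is in fact slightly more careful than the paper's, explicitly verifying that $t$ is well defined and handling the one-sided limits at the endpoints, but the substance is identical.
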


\begin{proof}  See Figure \ref{fig:2x*}.

\par\noindent \underline{1: $\Theta^+_L(P,x_t) \ge \Theta_R(P,x_t)$:}\\
If $x \in (x_t,x_{t+1})$  then 
\begin{eqnarray*}
\Theta_L(P,x) &=& \Theta^+_L(P,x_t) + \tau(x-x_t)\\
				   &>& \Theta_R(P,x_t) - \tau(x-x_t) = \Theta_R(P,x)
\end{eqnarray*}
Thus, by the definition of $t$ and  Claim \ref {claim:unimodality}, $x* = x_t $ and
$\Theta^1(P) = \Theta_R(P,x^*).$

\medskip

\par\noindent \underline {2: $\Theta^+_L(P,x_t) <  \Theta_R(P,x_t)$ and  $\Theta_L(P,x_{t+1}) > \Theta^-_R(P,x_{t+1})$ 
}\\
(\ref{eq:Lline}) and (\ref{eq:Rline}) define  $\Theta_L(P,x)$ and $\Theta_R(P,x)$ for
$x \in (x_t,x_{t+1})$.  Consider them as lines defined on the closure $[x_t,x_{t+1}]$.  Note that the condition
states that $\Theta_L(P,x)$ starts  below  $\Theta_R(P,x)$ and ends above it.  The lines therefore cross in the interval.  Calculation yields this is at
$$x' = \frac 1 {2 \tau} (\Theta_L(P,x_{t+1}) - \Theta_R(P,x_{t}) - \tau(x_{t+1}-x_t)).$$
By definition $\Theta_L(P,x') = \Theta_R(P,x')$ so, by claim \ref {claim:unimodality}, $x* = x'$ 
and
$\Theta^1(P) = \Theta_L(P,x^*)  = \Theta_R(P,x^*).$

\medskip

\par\noindent \underline {3: $\Theta_L(P,x_{t+1}) \le \Theta^-_R(P,x_{t+1})$:}\\

If $x \in (x_t,x_{t+1})$  then 
\begin{eqnarray*}
\Theta_L(P,x) &=& \Theta_L(P,x_{t+1}) - \tau(x_{t+1}-x)\\
				   &<& \Theta_R^-(P,x_t) +  \tau(x_{t+1}-x) = \Theta_R(P,x)
\end{eqnarray*}
Thus, by the definition of $t$ and  Claim \ref {claim:unimodality}, $x* = x_{t+1} $ and
$\Theta^1(P) = \Theta_L(P,x^*).$

\end{proof}

\medskip

The proof of Theorem \ref{thm:1sink} now follows:
Claim  \ref{claim:unimodality} allows  binary searching the {\em vertices} of $P$ to find 
$$t= \max \{ i \,:\, \Theta_L(P,x_i) <  \Theta_R(P,x_i)\}.$$
  At each vertex $x_i$, it is only necessary to check  if $\Theta_L(P,x_i) >  \Theta_R(P,x_i)$ which, from Claim \ref{claim:theta1PXalg}, requires $O(|P|)$ time.  Thus, $t$ can be found in  $O(|P| \log |P|)$ time.

After the binary search completes, 
 $\Theta_L(P,x_t)$, $\Theta_R(P,x_t)$,   $\Theta_L(P,x_{t+1})$ and $\Theta_R(P,x_{t+1})$ are all known.  Lemma
\ref {lem: 1opt} gives  the sink $x^*$ and value $\Theta^1(P) = \Theta(P,x)$ in another $O(1)$ time.  The total time taken by the two steps is $(|P| \log |P|)$.


\section{$k$-Sink Evacuation}
\label{sec:k_sink_evac}

\subsection{Properties}
\label{subsec:k_sink_evac_Prop}
From Definition \ref {def:ksink1} it  is easy  to derive

\begin{Claim}
\label{claim:min_k}
\begin{eqnarray*}
\forall k> 1,\quad  \Theta^k(P)  &=& \min_{\mbox{\rm partition}\  I}\  \max_{1 \le j \le k} \Theta^1(P^j_I)\\
				  &=& \min_{0 \le i < n} \max\left(\Theta^1(P_{0,i}),\, \Theta^{k-1}(P_{i+1,n})\right).
\end{eqnarray*}
\end{Claim}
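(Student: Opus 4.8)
The plan is to prove Claim~\ref{claim:min_k} directly from Definition~\ref{def:ksink1}, essentially by unwinding the nested min/max and observing that a legal $k$-partition is the same thing as ``choose the first cut, then legally $(k-1)$-partition the rest.'' First I would record the reformulation of the evacuation time of a partition. For a partition $I$ with sinks $Y$, we have $\Theta^k(P,\{I,Y\}) = \max_{1\le j\le k}\Theta^1(P^I_j,y_j)$, and taking the max over $Y\in{\cal Y}^I$ commutes across the outer $\max_j$ because the choices $y_j\in P^I_j$ are independent across $j$; hence $\max_{Y\in{\cal Y}^I}\Theta^k(P,\{I,Y\}) = \max_{1\le j\le k}\Theta^1(P^I_j)$, where $\Theta^1(P^I_j) = \min_{y\in P^I_j}\Theta^1(P^I_j,y)$ is the optimal $1$-sink time for that subpath (as in Theorem~\ref{thm:1sink}). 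This gives the first displayed equality $\Theta^k(P) = \min_{\text{partition }I}\max_{1\le j\le k}\Theta^1(P^I_j)$.

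Next I would establish the recursive equality. A legal partition $I=(0=i_0<i_1<\cdots<i_k=n)$ of $P$ is in bijection with a pair consisting of (i) a choice of the first breakpoint, which amounts to choosing $i = i_1-1$ with $0\le i<n$ so that $P_1^I = P_{0,i}$, and (ii) a legal $(k-1)$-partition of the remaining subpath $P_{i+1,n}$, namely $(i_1,i_2,\ldots,i_k)$ viewed as indices into $P_{i+1,n}$. Under this correspondence, $\max_{1\le j\le k}\Theta^1(P^I_j) = \max\bigl(\Theta^1(P_{0,i}),\ \max_{2\le j\le k}\Theta^1(P^I_j)\bigr)$, and the inner max over $j=2,\ldots,k$ ranges exactly over all subpaths of a legal $(k-1)$-partition of $P_{i+1,n}$. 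Therefore, minimizing, we get
\begin{eqnarray*}
\Theta^k(P) &=& \min_{0\le i<n}\ \min_{\text{$(k-1)$-partition $I'$ of }P_{i+1,n}}\ \max\Bigl(\Theta^1(P_{0,i}),\ \max_{1\le j\le k-1}\Theta^1((P_{i+1,n})^{I'}_j)\Bigr)\\
&=& \min_{0\le i<n}\ \max\Bigl(\Theta^1(P_{0,i}),\ \min_{\text{$(k-1)$-partition $I'$}}\max_{1\le j\le k-1}\Theta^1((P_{i+1,n})^{I'}_j)\Bigr)\\
&=& \min_{0\le i<n}\ \max\bigl(\Theta^1(P_{0,i}),\ \Theta^{k-1}(P_{i+1,n})\bigr),
\end{eqnarray*}
where the middle step pulls $\Theta^1(P_{0,i})$ out of the inner minimization (it does not depend on $I'$, and $\min$ distributes over $\max$ with a constant argument), and the last step applies the already-proven first equality to the subpath $P_{i+1,n}$ with parameter $k-1$.

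The only genuinely delicate point — and the step I would spend the most care on — is the commutation $\max_{Y\in{\cal Y}^I}\max_{1\le j\le k}\Theta^1(P^I_j,y_j) = \max_{1\le j\le k}\min_{y_j\in P^I_j}\Theta^1(P^I_j,y_j)$; wait, that is exactly the subtlety: the outer operator in Definition~\ref{def:ksink1} is $\max_{Y}$, so one must check it really collapses to choosing each $y_j$ optimally (i.e.\ to a $\min$) rather than adversarially. The resolution is that $\Theta^k(P,\{I,Y\})$ is itself a max over $j$, and the protocol designer controls $Y$; reading Definition~\ref{def:ksink1} literally, $\min_{I}\max_{Y\in{\cal Y}^I}$ is the intended ``choose partition, then choose sinks to minimize'' only if $\max_Y$ is read as the worst sink choice, which would be wrong — so I would instead interpret (consistent with the surrounding text and Figure~\ref{fig:k_sink_partition_a}, which says the evacuation time is $\max_j\Theta(P_{i_j},y_j)$ for a chosen $Y$) that the overall optimum is $\min_I\min_{Y\in{\cal Y}^I}\max_j\Theta^1(P^I_j,y_j)$, and then the independence of the coordinates $y_j$ gives $\min_{Y}\max_j\Theta^1(P^I_j,y_j)=\max_j\min_{y_j}\Theta^1(P^I_j,y_j)=\max_j\Theta^1(P^I_j)$, which is what the claim asserts. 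Modulo this reading of the definition, the rest is the routine bijection-and-distributivity argument sketched above, and no further obstacle arises; boundary cases ($k=1$ base of the recursion, empty or zero-weight end subpaths handled by the preprocessing convention of Section~\ref{sec:1_sink_evac}) are immediate.
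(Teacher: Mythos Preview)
Your proposal is correct and is precisely the ``easy to derive'' unwinding of Definition~\ref{def:ksink1} that the paper has in mind; the paper does not actually write out a proof of Claim~\ref{claim:min_k}, so you are simply being more explicit than the authors. Your observation that the $\max_{Y\in{\cal Y}^I}$ in Definition~\ref{def:ksink1} must be read as $\min_{Y\in{\cal Y}^I}$ (consistent with the surrounding prose and Figure~\ref{fig:k_sink_partition_a}) is a genuine typo in the paper, not a gap in your argument, and once that is fixed the identity $\min_{Y}\max_j \Theta^1(P^I_j,y_j)=\max_j\Theta^1(P^I_j)$ follows from coordinate independence exactly as you say; the recursive step then uses only the bijection between $k$-partitions and (first cut, $(k-1)$-partition of the tail) together with the elementary fact $\min_x\max(a,f(x))=\max(a,\min_x f(x))$, both of which you state correctly.
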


We will also need the following straightforward facts which simply state that adding a vertex, along with its associated weight and edge, to either the left or right side of a path, can not decrease the $k$-sink evacuation times.

\begin{Claim}
\label{claim:unimodal_k}
$$\forall i\le j < n,\,  \forall k, \quad  \Theta^k(P_{i,j}) \le \Theta^k(P_{i-1,j})$$
and
$$\forall 0< i\le j\,  \forall k, \quad  \Theta^k(P_{i,j}) \le \Theta^k(P_{i-1,j})$$
\end{Claim}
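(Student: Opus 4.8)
\textbf{Proof proposal for Claim \ref{claim:unimodal_k}.}

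The plan is to prove the statement directly from Definition \ref{def:ksink1} by exhibiting, for any evacuation protocol on the larger path, a protocol on the smaller path whose evacuation time is no larger. (I will only treat the first inequality, $\Theta^k(P_{i,j}) \le \Theta^k(P_{i-1,j})$, since the second is symmetric, obtained by reflecting the path.) First I would handle the base case $k=1$: by Theorem \ref{thm:main form}, $\Theta^1(P_{i,j},x) = \max\{\Theta_L(P_{i,j},x),\Theta_R(P_{i,j},x)\}$, and adding the vertex $x_{i-1}$ together with its weight $w_{i-1}\ge 0$ and edge $e_i$ on the left can only enlarge each of the maxima appearing in \eqref{eq:left-evac-gc} and \eqref{eq:right-evac-gc}: the index range $x_\ell < x$ gains the extra term for $\ell = i-1$, the partial sums $\sum w_j$ weakly increase, and the capacity minima $\min c_j$ weakly decrease. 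Hence $\Theta^1(P_{i,j},x) \le \Theta^1(P_{i-1,j},x)$ for every common sink location $x$, and taking the minimum over $x \in P_{i,j} \subseteq P_{i-1,j}$ on the left and over $x\in P_{i-1,j}$ on the right gives $\Theta^1(P_{i,j}) \le \Theta^1(P_{i-1,j})$.

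For general $k$ I would induct on $k$ using Claim \ref{claim:min_k}. Write $\Theta^k(P_{i-1,j}) = \min_{i-1 \le m < j}\max\bigl(\Theta^1(P_{i-1,m}),\,\Theta^{k-1}(P_{m+1,j})\bigr)$ and let $m^\star$ be the index achieving this minimum. If $m^\star \ge i$, then the pair $(\Theta^1(P_{i,m^\star}),\Theta^{k-1}(P_{m^\star+1,j}))$ is a legal split of $P_{i,j}$, and by the $k=1$ case $\Theta^1(P_{i,m^\star}) \le \Theta^1(P_{i-1,m^\star})$ while the second term is unchanged, so $\Theta^k(P_{i,j}) \le \max(\Theta^1(P_{i-1,m^\star}),\Theta^{k-1}(P_{m^\star+1,j})) = \Theta^k(P_{i-1,j})$. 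If instead $m^\star = i-1$, then $P_{i-1,m^\star}$ is the single vertex $x_{i-1}$ with $\Theta^1 = 0$, so $\Theta^k(P_{i-1,j}) = \Theta^{k-1}(P_{i,j})$, and since $\Theta^{k-1}(P_{i,j}) \ge \Theta^k(P_{i,j})$ trivially (any $(k-1)$-sink protocol is also a legal $k$-sink protocol, e.g.\ by splitting off an empty leftmost subpath), we again get $\Theta^k(P_{i,j}) \le \Theta^k(P_{i-1,j})$.

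I do not expect any serious obstacle here; the statement is essentially monotonicity of a min-max over nested feasible sets together with the elementary monotonicity of the $1$-sink formula. The only point requiring a little care is the boundary case $m^\star = i-1$ in the induction, i.e.\ making sure the degenerate subpath consisting of a single (possibly zero-weight) vertex is handled correctly, and confirming the trivial inequality $\Theta^{k-1}(P_{i,j}) \ge \Theta^k(P_{i,j})$ — that more sinks never hurt — which itself follows from Claim \ref{claim:min_k} by taking $i = -1$-style empty prefix, or more directly from Definition \ref{def:ksink1} since every partition into $k-1$ subpaths refines to one into $k$ by cutting off a single vertex. Everything else is a routine inspection of the formulas in Theorem \ref{thm:main form}.
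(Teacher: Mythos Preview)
The paper does not actually give a proof of Claim~\ref{claim:unimodal_k}; it is simply introduced as a ``straightforward fact'' about monotonicity. Your overall plan---restrict an optimal protocol on the larger path to one on the smaller path---is the natural one and matches the paper's one-line justification.

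There is, however, one genuine slip in your $k=1$ argument. From the pointwise inequality $\Theta(P_{i,j},x)\le\Theta(P_{i-1,j},x)$ for $x\in P_{i,j}$ you write ``taking the minimum over $x\in P_{i,j}\subseteq P_{i-1,j}$ on the left and over $x\in P_{i-1,j}$ on the right gives $\Theta^1(P_{i,j})\le\Theta^1(P_{i-1,j})$.'' This does not follow: minimising the right-hand side over the \emph{larger} domain $P_{i-1,j}$ can only make it smaller, so the two inequalities point in opposite directions and do not chain. Concretely, nothing so far rules out the optimal sink $x^\star$ for $P_{i-1,j}$ lying in $[x_{i-1},x_i)$, where your pointwise comparison is unavailable. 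The easy fix is to handle that case separately: if $x^\star\in[x_{i-1},x_i)$ then
\[
\Theta^1(P_{i-1,j})=\Theta(P_{i-1,j},x^\star)\ge\Theta_R(P_{i-1,j},x^\star)\ge\Theta_R(P_{i-1,j},x_i)=\Theta_R(P_{i,j},x_i)=\Theta(P_{i,j},x_i)\ge\Theta^1(P_{i,j}),
\]
using that $\Theta_R$ is monotone decreasing in $x$ (Claim~\ref{claim:mono}) and that $\Theta_L(P_{i,j},x_i)=0$. With this patch the base case is complete, and your induction step via Claim~\ref{claim:min_k} goes through as written.

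A small side remark: your assertion that ``the capacity minima $\min c_j$ weakly decrease'' is not quite accurate---for the existing terms in \eqref{eq:left-evac-gc} the relevant minima do not involve the new edge $e_i$ and so are unchanged, while $\Theta_R$ is entirely unaffected by adding a vertex on the left. This does not damage the conclusion (the increase comes from the larger prefix sums $W_t$ and the extra term $t=i-1$), but it is worth stating precisely.
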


\subsection{Testing}
\label{subsec:k_sink_evac_Test}
Before  calculating $\Theta^k(P)$ we first show how to test  whether $\Theta^k(P_{i,j}) \le \alpha$, for any specified $\alpha>0.$
This uses the following direct consequence of Claim \ref{claim:unimodal_k}.
\begin{Claim}
\label{claim:test_recur}
Set 
$$j^* = \max \{   j   \,:\,   i \le  j \le n  \mbox{ and }  \Theta^1(P_{i,j}) \le \alpha \}.$$
Then, for $k>1,$ 
$$
\Theta^k(P_{i,n}) \le \alpha
\quad \quad\mbox{\rm if and only if}\quad \quad 
j^*=n  \ \mbox{\rm or } \ \Theta^{k-1}(P_{j^*+1,n}) \le \alpha
$$
\end{Claim}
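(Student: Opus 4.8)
\textbf{Plan for proving Claim~\ref{claim:test_recur}.}

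The plan is to exploit the recursive structure of $\Theta^k$ given in Claim~\ref{claim:min_k} together with the monotonicity in Claim~\ref{claim:unimodal_k} to show that, among all choices of where to end the first subpath, the greedy choice $j^*$ (the largest index at which the first subpath is still evacuable within time $\alpha$ by a single sink) is optimal. First I would recall that by Claim~\ref{claim:min_k}, for $k>1$ we have $\Theta^k(P_{i,n}) = \min_{i \le m < n}\max\bigl(\Theta^1(P_{i,m}),\ \Theta^{k-1}(P_{m+1,n})\bigr)$, so $\Theta^k(P_{i,n}) \le \alpha$ holds if and only if there exists some split index $m$ with $i \le m < n$ (or $m=n$, meaning the whole remainder is handled by one sink, which is the $j^*=n$ degenerate case) such that $\Theta^1(P_{i,m}) \le \alpha$ \emph{and} $\Theta^{k-1}(P_{m+1,n}) \le \alpha$.

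The ($\Leftarrow$) direction is immediate: if $j^*=n$ then $\Theta^1(P_{i,n}) \le \alpha$, which means a single sink already suffices for the entire path, so $\Theta^k(P_{i,n}) \le \Theta^1(P_{i,n}) \le \alpha$ trivially (using $k$ sinks, only one of which need be nontrivially used). If instead $\Theta^{k-1}(P_{j^*+1,n}) \le \alpha$, then taking the split $m = j^*$ we have $\Theta^1(P_{i,j^*}) \le \alpha$ by definition of $j^*$ and $\Theta^{k-1}(P_{j^*+1,n}) \le \alpha$ by hypothesis, so the max is $\le \alpha$ and hence $\Theta^k(P_{i,n}) \le \alpha$. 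For the ($\Rightarrow$) direction, suppose $\Theta^k(P_{i,n}) \le \alpha$ and $j^* < n$. Then there is a witnessing split $m$ with $\Theta^1(P_{i,m}) \le \alpha$ and $\Theta^{k-1}(P_{m+1,n}) \le \alpha$. Since $\Theta^1(P_{i,m}) \le \alpha$, by maximality of $j^*$ we must have $m \le j^*$. Therefore $P_{j^*+1,n}$ is a (right-end-anchored) subpath of $P_{m+1,n}$ obtained by deleting the vertices $x_{m+1},\dots,x_{j^*}$ from the left end. By repeated application of the left-deletion inequality in Claim~\ref{claim:unimodal_k} (each deletion of a leftmost vertex cannot increase the $(k-1)$-sink evacuation time), we get $\Theta^{k-1}(P_{j^*+1,n}) \le \Theta^{k-1}(P_{m+1,n}) \le \alpha$, which is exactly what we needed.

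The only mild subtlety — and the main point to get right — is the bookkeeping at the boundary: ensuring the degenerate case $j^*=n$ is handled cleanly (here $P_{j^*+1,n}$ is empty, so the recursion bottoms out and no second term is needed), and making sure the monotonicity argument is applied in the correct direction (we are shrinking $P_{m+1,n}$ from its left side toward $P_{j^*+1,n}$, which is precisely the hypothesis covered by the second inequality of Claim~\ref{claim:unimodal_k}). I do not expect any real obstacle beyond this; the content is entirely a consequence of the stated recursion and monotonicity facts, assembled via the greedy-exchange observation that pushing the first cut as far right as possible only helps the remaining suffix.
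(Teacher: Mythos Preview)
Your proposal is correct and follows exactly the approach the paper intends: the paper states only that the claim is ``a direct consequence of Claim~\ref{claim:unimodal_k}'' without spelling out the argument, and you have filled in precisely those details by combining the recursion of Claim~\ref{claim:min_k} with the monotonicity of Claim~\ref{claim:unimodal_k} via the natural greedy-exchange step ($m \le j^*$ implies $\Theta^{k-1}(P_{j^*+1,n}) \le \Theta^{k-1}(P_{m+1,n})$). There is nothing to add or correct.
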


\begin{algorithm}[t]

\caption{{\em Peel} returns  {\bf Yes}  if $\Theta^k(P_{i,n} \le \alpha)$.  Otherwise it returns {\bf No}.}
\label{alg:test}
\begin{algorithmic}[1]
\Procedure{Peel}{$P_{i,n},\,k,\, \alpha$}
\State {$j^* = \max \{   j   \,:\,   i \le  j \le n   \mbox{ and }  \Theta^1(P_{i,j}) \le \alpha \}$}
\If  {$j^*= n$}
       \State {Return(Yes) }   
\ElsIf  {$k=1$}
      \State {Return(No) }   
\Else
       \State {Return({Test}{($P_{j^*+1,n},\,k-1,\, \alpha$)}) }  
  \EndIf         
\EndProcedure
\end{algorithmic}
\end{algorithm}

This immediately allows checking   $\Theta^k(P_{i,n}) \le \alpha$ via the {\em peeling process} described in Algorithm \ref {alg:test}: Peel $P_{i,j^*},$ the largest possible subpath that can be evacuated in time $\le \alpha$, 
 off from the left side of $P_{i,n}$  by replacing  $i$ with $j^*$. 
 
 If $j^*=n$ then $\Theta^k(P_{i,n}) \le \alpha$. 
 
 If $j^*\not=n$ and $k=1$ then $\Theta^k(P_{i,j^*})> \alpha$.  Otherwise, $\Theta^k(P_{i,n}) \le \alpha$ if and only if
 $\Theta^{k-1}(P_{j^*+1,n}) \le \alpha$

%


%

Claim  \ref{claim:j*calc} will permit finding   $j^*$ in  $O(|P_{i,j^*}|\log |P_{i,j^*}|) $ time.
Thus,  the cost of peeling off a subpath $P'$ is $O(|P'| \log |P'|)$. 
Suppose that the algorithm iteratively peels off subpaths $P'_1,P'_2,\ldots,P'_{k'}$ from $P_{i,n}$ where $k' \le k.$
Once  a subpath is peeled off it is never put back so  $\sum_{t \le k'} |P'_t| \le |P_{i,n}|$ and
 the total cost of Algorithm \ref {alg:test}  is thus
$$\sum_{t \le k'} |P'_t|\log |P'_t| \le \log |P_{i,n}|  \sum_{t \le k'} |P'_t|\log |P'_t| \le |P_{i,n}| \log |P_{i,n}|.$$

To prove Theorem \ref {thm:ksink_ver} it therefore suffices to show how to find $j^*$ in
$O(|P_{i,j^*}|\log |P_{i,j^*}|)$ time.
This id done by first scanning right from $i$
to find the largest $x^*$ such that $\Theta_L(P_{i,j},x) \le \alpha$ and then scanning right from $x^*$ to find  the largest $j^*$ such that $\Theta_R(P_{i,j},x^*) \le \alpha.$ The correctness of this procedure is proven below.
(See Figure \ref{fig:doubling})
\begin{Claim}
\label{claim:peel}
Let $i$ be fixed and $\alpha >0$. Define
\begin{eqnarray*}
x^* &=&  \max \{  x  \in P_{i,n}\,:\,   x_i \le  x   \mbox{ and }  \Theta_L(P_{i,n},x) \le \alpha \}\\
j^* &=&  \max \{   j   \,:\,   x^* \le  x_j   \mbox{ and }  \Theta_R(P_{i,j},x^*) \le \alpha \}.
\end{eqnarray*}
where $x* \in P_{i,n}$ and $j^*$ is a vertex in $P_{i,n}$.
Then 
$$j^* = \max \{   j   \,:\,   i \le  j \le n  \mbox{ and }  \Theta^1(P_{i,j}) \le \alpha \}.$$
Furthermore,
$$\Theta(P_{i,j^*},x^*) \le \alpha.$$
\end{Claim}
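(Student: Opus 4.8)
The plan is to identify the combinatorial target
$J := \max\{\,j : i \le j \le n \text{ and } \Theta^1(P_{i,j}) \le \alpha\,\}$
with $j^*$ by reducing single-sink feasibility on each prefix $P_{i,j}$ to two one-dimensional conditions: one ``from the left'', captured by $x^*$, and one ``from the right'', captured by the constraint defining $j^*$. The ingredients I would use are (i) a \emph{locality} property of $\Theta_L$, (ii) the monotonicity and continuity of $\Theta_L,\Theta_R$ in the sink position (Claim~\ref{claim:mono}), and (iii) the monotonicity of the relevant quantities under appending a vertex on the right (Claim~\ref{claim:unimodal_k}).

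First I would establish the locality identity $\Theta_L(P_{i,j},y)=\Theta_L(P_{i,n},y)$ for every sink $y\in P_{i,j}$. This is immediate from the formula of Theorem~\ref{thm:main form}: writing $x_k<y\le x_{k+1}$, the expression for $\Theta_L(P,y)$ depends only on the $x_t<y$, their weights, and the capacities $c_{t+1},\dots,c_{k+1}$, all of which already lie in $P_{i,j}$ whenever $y\le x_j$ (intuitively, evacuees on the right of the sink never interfere with those on the left). Combining this with the definition of $x^*$ and the fact that $\Theta_L(P_{i,n},\cdot)$ is monotone increasing and left-continuous (Claim~\ref{claim:mono}), the set of ``$L$-feasible'' sinks of $P_{i,j}$ (those $y$ with $\Theta_L(P_{i,j},y)\le\alpha$) is exactly the closed interval $[x_i,\min(x_j,x^*)]$. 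By the monotone decrease (and right-continuity) of $\Theta_R$ in the sink, the ``$R$-feasible'' sinks of $P_{i,j}$ form a closed interval $[b_j,x_j]$.

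Since $\Theta^1(P_{i,j})\le\alpha$ holds precisely when some sink of $P_{i,j}$ is simultaneously $L$- and $R$-feasible, i.e.\ when these two intervals intersect, I would conclude that $\Theta^1(P_{i,j})\le\alpha$ iff $b_j\le\min(x_j,x^*)$; and, because $b_j\le x^*$ is equivalent to $\Theta_R(P_{i,j},x^*)\le\alpha$ whenever $x^*\le x_j$, this is the same as saying: $\Theta^1(P_{i,j})\le\alpha$ iff $x_j\le x^*$, or else $x_j>x^*$ and $\Theta_R(P_{i,j},x^*)\le\alpha$. Then I would invoke monotonicity in $j$: $\Theta^1(P_{i,j})$ is non-decreasing in $j$ (Claim~\ref{claim:unimodal_k} with $k=1$), and $\Theta_R(P_{i,j},x^*)$ is non-decreasing in $j$ (the ``appending a vertex on the right cannot help'' statement of Claim~\ref{claim:unimodal_k}, or directly from (\ref{eq:right-evac-gc})). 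Hence the condition above holds on an initial segment of indices $j$ whose largest element is exactly $j^*$, so $j^*=J$. The ``furthermore'' then falls out: $\Theta_R(P_{i,j^*},x^*)\le\alpha$ by the definition of $j^*$, while $\Theta_L(P_{i,j^*},x^*)=\Theta_L(P_{i,n},x^*)\le\alpha$ by locality (using $x^*\le x_{j^*}$) and the definition of $x^*$; therefore $\Theta(P_{i,j^*},x^*)=\max\{\Theta_L(P_{i,j^*},x^*),\Theta_R(P_{i,j^*},x^*)\}\le\alpha$.

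Because Theorem~\ref{thm:main form} is already available, the locality identity costs essentially nothing, so the part I expect to require actual care is the interval-intersection bookkeeping and, above all, the degenerate boundary situation where $x^*$ lies strictly between consecutive vertices, $x_{m-1}<x^*<x_m$. There, no vertex of $P_{i,j}$ with $j<m$ lies to the right of $x^*$, so $\Theta_R(P_{i,j},x^*)$ must be read as $0$ in that range; if in addition $\Theta_R(P_{i,m},x^*)>\alpha$, one then checks that $j^*=m-1=J$, so the claimed identity persists provided the set defining $j^*$ is interpreted with this convention. Verifying the base case $\Theta^1(P_{i,i})=0\le\alpha$ (so that both $J$ and $j^*$ are at least $i$) is the only remaining loose end.
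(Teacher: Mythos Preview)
Your proposal is correct and uses the same essential ingredients as the paper's proof---the locality identity $\Theta_L(P_{i,j},y)=\Theta_L(P_{i,n},y)$ and the monotonicity of $\Theta_R$ in the sink---but packages them differently. The paper gives a short contradiction argument: assuming $j^*<J$, it takes the optimal sink $x'$ of $P_{i,J}$, uses locality to get $x'\le x^*$, and then the monotonicity of $\Theta_R(P_{i,J},\cdot)$ to conclude $\Theta_R(P_{i,J},x^*)\le\Theta_R(P_{i,J},x')\le\alpha$, contradicting maximality of $j^*$. Your route is a direct, two-sided characterization of feasibility of each $P_{i,j}$ as an interval-intersection condition, which then reduces by monotonicity in $j$. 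Your version is longer but more transparent, and in particular you flag the boundary case $x_{m-1}<x^*<x_m$ with $\Theta_R(P_{i,m},x^*)>\alpha$, where the defining set for $j^*$ is literally empty under the claim's stated constraint $x^*\le x_j$; the paper's proof glosses over this (it tacitly assumes $j^*$ exists and $j^*\le J$). Your suggested convention---reading $\Theta_R(P_{i,j},x^*)=0$ when no vertex of $P_{i,j}$ lies to the right of $x^*$, and allowing such $j$ in the maximum---is exactly what is needed to make the statement and the downstream algorithm watertight.
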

\begin{proof}
Let $$j'  = \max \{   j   \,:\,   i \le  j   \mbox{ and }  \Theta_R(P_{i,j},x) \le \alpha \}.$$ 
Suppose the claim is false, i.e., $j^* \not=j'$ so $j^* < j'.$
Let $x'$ be the unique sink satisfying 
$\Theta^1(P_{i,j'}) = \Theta(P_{i,j'},x').$

First note that 
$$\Theta_L(P_{i,n},x') = \Theta_L(P_{i,j},x') \le \Theta^1(P_{i,j},x') \le \alpha$$
so, by definition,  $x' \le x^*.$  By assumption
 $x^* \le j^* <  j'$ so $\Theta_R(P_{i,j'},x^*)$ is well defined and, by the decreasing monotonicity of $\Theta_R(P_{i,j'},x)$ from   Claim \ref{claim:mono}
$$\Theta_R(P_{i,j'},x^*) \le \Theta_R(P_{i,j'},x')\le \Theta^1(P_{i,j'},x')  \le \alpha.$$
But, from the definition of $j^*$, this immediately implies $j' \le j^*$ contradicting the original assumption.  Thus $j' = j.$

The fact that $\Theta(P_{i,j^*},x^*) \le \alpha$ follows directly from the definitions of $x^*,j^*.$

\end{proof}

\begin{figure}
\begin{center}
\includegraphics[width=4.5in]{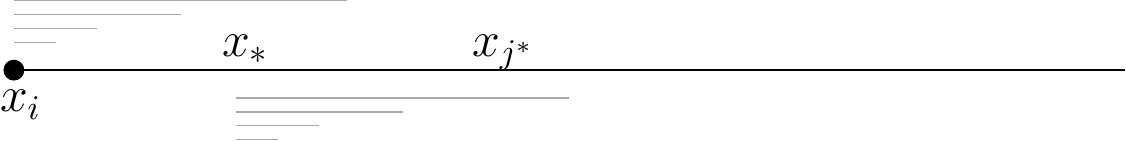}
\end{center}
\caption{An illustration for Claims  \ref{claim:peel} and \ref{claim:j*calc}.  $x^*$ (not necessarily a vertex) is the  rightmost point at which 
$\Theta_L(P_{i,n},x^*) \le \alpha$.  $j^*$ is the rightmost point at which $\Theta_R(P_{i,j^*},x^*) \le \alpha$.
$x^*$ is found by doing a doubling search on $i'-i$, and then a binary search in $[i,i']$ to find the largest $i'$ such that 
$\Theta_L(P_{i,i'},x_{i'}) \le \alpha$ and then shifting slightly to the right. $j*$ (which is a vertex) is found using a similar doubling search and binary search technique. At the procedures' end, $j^*$ is the largest index such that $\Theta^1(P_{i,j^*}) \le \alpha$. }
\label{fig:doubling}
\end{figure}

Finally, $x^*$ and $j^*$ can be found efficiently. 
\begin{Claim}
\label{claim:j*calc}
Let $i$ be fixed. Then  $x^*,j^*$ as defined by Claim \ref{claim:peel} can be found in 
$O(|P_{i,j^*}|  \log |P_{i,j^*}|)$ time.
\end{Claim}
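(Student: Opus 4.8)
The plan is to compute $x^*$ first and then $j^*$, each by a \emph{galloping} (exponential) search refined by a binary search, organized so that the total time depends only on $|P_{i,j^*}|$ and not on $|P_{i,n}|$. I will use three facts. (i) By the formula of Theorem~\ref{thm:main form}, $\Theta_L(P_{i,n},x)$ is determined entirely by the vertices and edges at or to the left of $x$; in particular $\Theta_L(P_{i,n},x)=\Theta_L(P_{i,j},x)$ whenever $x\le x_j$, and by Claim~\ref{claim:theta1PXalg} computing $\Theta_L(P_{i,j},x_j)$ costs $O(|P_{i,j}|)$ time. Symmetrically, $\Theta_R(P_{i,j},x^*)$ depends only on the vertices and edges at or to the right of $x^*$, so if $i''$ is the index of the first vertex with $x_{i''}\ge x^*$ it can be evaluated by a single scan in $O(j-i''+1)$ time. (ii) $\Theta_L(P_{i,n},\cdot)$ is monotonically increasing (Claim~\ref{claim:mono}), so the vertices $x_j$ with $\Theta_L(P_{i,j},x_j)\le\alpha$ form a prefix of the indices $\ge i$, and $\Theta_R(P_{i,j},x^*)$ is nondecreasing in $j$ (adding vertices on the right can only increase the right-evacuation time, as is immediate from Theorem~\ref{thm:main form}), so the $j$ with $\Theta_R(P_{i,j},x^*)\le\alpha$ form a prefix of the indices $\ge i''$. (iii) By Claim~\ref{claim:mono}, on any edge $(x_{i'},x_{i'+1})$ the function $\Theta_L(P_{i,n},\cdot)$ is a line of slope $\tau$, left-continuous at $x_{i'+1}$.

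To find $x^*$: I first locate $i'=\max\{j:\Theta_L(P_{i,j},x_j)\le\alpha\}$ by probing the vertices at indices $i+1,i+2,i+4,\dots$ — each probe testing $\Theta_L(P_{i,i+2^s},x_{i+2^s})\le\alpha$ at cost $O(2^s)$ by Claim~\ref{claim:theta1PXalg} — until the inequality first fails at $i+2^s$ (if instead we reach $n$ with the inequality still holding, then $\Theta_L(P_{i,n},x_n)\le\alpha$, so $x^*=x_n$, $j^*=n$, and we stop), and then binary-searching $(\,i+2^{s-1},\,i+2^{s}\,]$ for $i'$. Since the galloping stopped at $2^s$ but not at $2^{s-1}$, we have $2^{s-1}\le i'-i<2^s$, so the galloping phase costs $\sum_{r\le s}O(2^r)=O(i'-i)$ and the $O(\log(i'-i))$ binary-search probes, each costing $O(2^s)=O(i'-i)$, cost $O((i'-i)\log(i'-i))$ in total. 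Having $i'$, I compute $\Theta_L(P_{i,i'+1},x_{i'+1})$ (one extra vertex, $O(i'-i)$ time); this equals $\Theta_L(P_{i,n},x_{i'+1})>\alpha$, and from it Claim~\ref{claim:mono} gives $\Theta^+_L(P,x_{i'})=\Theta_L(P_{i,i'+1},x_{i'+1})-\tau(x_{i'+1}-x_{i'})$. Comparing this with $\alpha$ yields $x^*$ in $O(1)$ more time: $x^*=x_{i'}$ if $\Theta^+_L(P,x_{i'})\ge\alpha$, and $x^*=x_{i'}+\bigl(\alpha-\Theta^+_L(P,x_{i'})\bigr)/\tau\in(x_{i'},x_{i'+1})$ otherwise.

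To find $j^*$: let $i''\in\{i',i'+1\}$ be the index of the first vertex with $x_{i''}\ge x^*$. Applying the identical galloping-then-binary-search scheme to the nondecreasing map $j\mapsto\Theta_R(P_{i,j},x^*)$ — each evaluation costing $O(j-i''+1)$ time as in fact (i) — I obtain $j^*=\max\{j:\Theta_R(P_{i,j},x^*)\le\alpha\}$ (with $j^*=n$ handled as the terminating case, and the degenerate case where even $\Theta_R(P_{i,i''},x^*)>\alpha$ returning the value $i'$, which by Claim~\ref{claim:peel} is $\max\{j:\Theta^1(P_{i,j})\le\alpha\}$ there) in $O((j^*-i'')\log(j^*-i''))$ time; by Claim~\ref{claim:peel} this is the desired $j^*$, and $\Theta(P_{i,j^*},x^*)\le\alpha$ holds by construction. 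To combine the two costs I note that $i'\le j^*$: since $\Theta_R(P_{i,i'},x_{i'})=0$ we get $\Theta^1(P_{i,i'})\le\Theta(P_{i,i'},x_{i'})=\Theta_L(P_{i,i'},x_{i'})\le\alpha$, so $i'$ is among the indices $j$ with $\Theta^1(P_{i,j})\le\alpha$, whence $i'\le j^*$ by Claim~\ref{claim:peel}. As also $i''\ge i$, both searches cost $O\bigl((j^*-i)\log(j^*-i)\bigr)=O(|P_{i,j^*}|\log|P_{i,j^*}|)$.

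The crux is making the running time output-sensitive: a plain binary search over all of $P_{i,n}$ would cost $\Theta(|P_{i,n}|\log|P_{i,n}|)$, far too much when the peeled subpath is short, so the galloping phase is indispensable; the key quantitative point is that galloping isolates an interval whose length is within a constant factor of $i'-i$ (resp.\ $j^*-i''$), so that the ensuing binary-search probes — each linear in that length — sum only to $O((i'-i)\log(i'-i))$ (resp.\ $O((j^*-i'')\log(j^*-i''))$). The remaining delicate point is that $x^*$ need not be a vertex; this is handled by running the first search on the vertex $x_{i'}$ and then refining to a point in one step via the piecewise-linear, left-continuous description of $\Theta_L$ from Claim~\ref{claim:mono}, together with picking off the boundary cases $i'=n$ and $j^*=n$ (and the degenerate $\Theta_R(P_{i,i''},x^*)>\alpha$) explicitly.
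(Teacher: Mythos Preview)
Your proof is correct and follows essentially the same approach as the paper: a doubling (galloping) search to localize $i'$, followed by a binary search inside the resulting interval, then a one-step refinement to the non-vertex point $x^*$ via Claim~\ref{claim:mono}, and finally a second doubling-plus-binary search for $j^*$. Your write-up is in fact more careful than the paper's on two points: you explicitly justify the inequality $i'\le j^*$ (the paper merely asserts it), and you discuss the boundary cases $i'=n$, $j^*=n$, and the degenerate situation $\Theta_R(P_{i,i''},x^*)>\alpha$, which the paper omits.
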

\begin{proof}

Use doubling search.  For given  $x_{i'}$, Claim  \ref{claim:theta1PXalg}  permits testing in $O(|P_{i,i'}|)$ time whether 
 $\Theta_L(P_{i,i'},x_{i'}) \le \alpha$.   Now
\begin{enumerate}
\item Iteratively test $k=0,1,2,\ldots$ until finding the smallest  $k$ such that $\Theta_L(P_{i,i'},x_{i'}) > \alpha$ for
$i'=i+2^k$. Label this as $k^*$.
\item Binary search  in the range  $[ i+2^{k^*-1},i+2^{k^*})$ for the largest $i'$ such that $\Theta_L(P_{i,i'},x_{i'}) \le \alpha$
\item Use Claim \ref{claim:mono} to find $x^* \in [x_{i'}, x_{i'+1}]$
\end{enumerate}
Note that $i'$ found in step 2 satisfies $i'\le j^*$ so, by the properties of the doubling search.
$$2^{k^*} \le 2(i'-i+i)  \le 2 (j^*-i+1) = |P_{i,j^*}|$$
Thus, the total number of tests performed in steps (1) and (2) is $O(\log |P_{i,j^*}|)$.  Since each test requires 
$O(|P_{i,j^*}|)$ time the total time for (1) and (2) is $O(P_{i,j^*}| \log |P_{i,j^*}|)$.
(3) only requires $O(1)$ time so finding $x^*$ only requires $O(P_{i,j^*}| \log |P_{i,j^*}|)$ in total.

Similarly, once $x^*$ is known a  doubling search can be used to find the largest index $j^*$ such that
$\Theta_R(P_{i,j^*}, x^*) \le\alpha$. By a similar analysis the cost if this doubling search will be $O(|P_{i',j^*}|\log |P_{i',j^*}|)$,
so the total cost of finding both $x^*$ and $j^*$ is $O(|P_{i,j^*}|  \log |P_{i,j^*}|)$.
\end{proof}

We note that the algorithm described so far only checks whether $\Theta^k(P_{i,j}) \le \alpha$ is true or not. The statement of Theorem \ref{thm:ksink_ver} requires that if the statement is true, then  the algorithm must also provide an evacuation protocol with time $\le \alpha$. Such a protocol is a partition into subintervals with a sink for each subinterval such that the evacuation time for each subinterval to its designated sink is $\le \alpha$.  

Algorithm \ref{alg:test} can be easily modified to return such a protocol without increasing its asymptotic running time.    Algorithm  \ref{alg:test} uses Claims \ref {claim:peel} and \ref{claim:j*calc} to find the $j^*$ with the $j^*$ denoting the end of the current peeled subpath.  When finding $j^*$, the subroutine from Claim \ref{claim:j*calc} also explicitly finds the sink $x^*$ to which the current subpath  evacuates in time $\le \alpha$.  To return the required evacuation protocol it is therefore only necessary to store the list of the $j^*,x^*$ generated while performing the peeling and have the algorithm return them if $\Theta^k(P_{i,j}) \le \alpha$.

\vspace*{-.04in}

\subsection{The Final Algorithm}
\label{subsec:k_sink_evac_Alg}

\begin{figure}
\begin{center}
\includegraphics[width=4.5in]{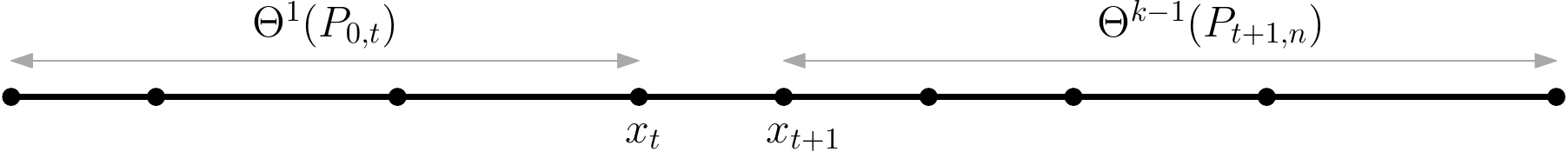}
\end{center}
\caption{Illustration for Theorem  \ref{thm:ksink_main}.
$t$ is the largest index such that $\Theta^1(P_{0,t}) < \Theta^{k-1}(P_{t+1,n})$.  Then,
$\Theta^k(P) = \min  \left(\Theta^1(P_{0,{t+1}}),\, \Theta^{k-1}(P_{t+1,n})  \right)$.} 
\label{fig:final_alg}
\end{figure}

Our  final algorithm for  calculating $\Theta^k(P)$ assumes that at least $k+1$ vertices on $P$ have positive weight.  If not, then the problem is trivial since each such vertex could be assigned its own sink, leading to an optimal $0$ evacuation time.

Under  that assumption,
 first observe that Claims   \ref {claim:min_k}  and  \ref{claim:unimodal_k} directly imply (See Figure \ref{fig:final_alg})

\begin{theorem}
\label{thm:ksink_main}
For $k>1$ such that  $1 < k < n$ set 
\vspace*{-.05in}
$$t = \max\{i \,:\, \Theta^1(P_{0,i}) < \Theta^{k-1}(P_{i+1,n})\}$$
\vspace*{-.03in}
Then 
\begin{itemize}
\item[\bf (a)] $0 \le t < n$
\item[\bf (b)]  $\forall i \le t,\ \Theta^1(P_{0,i}) < \Theta^{k-1}(P_{i+1,n})$
\item[\bf (c)]  $\forall i > t,\ \Theta^1(P_{0,i}) \ge \Theta^{k-1}(P_{i+1,n})$
\vspace*{.03in}
\item[\bf (d)]
$\Theta^k(P) = \min  \left(\Theta^1(P_{0,{t+1}}),\, \Theta^{k-1}(P_{t+1,n})  \right)$
\end{itemize}

\end{theorem}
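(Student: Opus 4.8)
The plan is to notice that, in the definition of $t$, the quantity $\Theta^1(P_{0,i})$ is non-decreasing in $i$ while $\Theta^{k-1}(P_{i+1,n})$ is non-increasing in $i$, so these two functions cross at most once; $t$ is essentially that crossover index, and the min-of-max in Claim~\ref{claim:min_k} then collapses to a comparison of the two sides at $t$ and $t+1$.

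Concretely, I would write $f(i)=\Theta^1(P_{0,i})$ and $g(i)=\Theta^{k-1}(P_{i+1,n})$ for $i\in\{0,1,\ldots,n-1\}$, the range for which both subpaths are non-empty and $g$ uses $k-1\ge 1$ sinks (matching the range in Claim~\ref{claim:min_k}). By Claim~\ref{claim:unimodal_k}, extending a subpath at either end cannot decrease its evacuation time; hence $f$ is non-decreasing (increasing $i$ appends a vertex on the right of $P_{0,i}$) and $g$ is non-increasing (increasing $i$ deletes a vertex from the left of $P_{i+1,n}$).

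For part (a): $i=0$ belongs to $S=\{i:f(i)<g(i)\}$, because $P_{0,0}$ evacuates in time $f(0)=0$ by placing the sink on $x_0$, whereas $P_{1,n}$ contains at least $k$ positive-weight vertices (the standing hypothesis gives $k+1$ such vertices in $P$, and only $x_0$ is dropped), so any split of $P_{1,n}$ into $k-1$ subpaths leaves some subpath with two positive-weight vertices at distinct coordinates, which by Theorem~\ref{thm:main form} forces $g(0)>0$. Hence $t=\max S\ge 0$. Also $i=n-1\notin S$, since $f(n-1)\ge 0=g(n-1)=\Theta^{k-1}(P_{n,n})$, so $t\le n-2$; this yields $t<n$ and, in particular, makes the range $\{t+1,\ldots,n-1\}$ non-empty.

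For parts (b)--(d): by definition $f(t)<g(t)$, so for any $i\le t$ the two monotonicities give $f(i)\le f(t)<g(t)\le g(i)$, which is (b); and (c) is immediate from the maximality of $t$. Starting from $\Theta^k(P)=\min_{0\le i<n}\max(f(i),g(i))$ in Claim~\ref{claim:min_k}, split the index range at $t$: for $i\le t$, (b) gives $\max(f(i),g(i))=g(i)$, whose minimum over this block is $g(t)$ since $g$ is non-increasing; for $t+1\le i\le n-1$, (c) gives $\max(f(i),g(i))=f(i)$, whose minimum over this block is $f(t+1)$ since $f$ is non-decreasing. Therefore $\Theta^k(P)=\min\bigl(g(t),f(t+1)\bigr)=\min\bigl(\Theta^1(P_{0,t+1}),\Theta^{k-1}(P_{t+1,n})\bigr)$, which is (d). The only delicate point is part (a)---checking that $S$ is non-empty and that $t$ is interior---and that is exactly where the hypothesis that $P$ has at least $k+1$ positive-weight vertices enters; everything else is a direct consequence of the two monotonicities.
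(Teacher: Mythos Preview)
Your proof is correct and follows essentially the same approach as the paper: establish the monotonicity of $f(i)=\Theta^1(P_{0,i})$ and $g(i)=\Theta^{k-1}(P_{i+1,n})$ via Claim~\ref{claim:unimodal_k}, use this to prove (b)--(c), and then split the min--max from Claim~\ref{claim:min_k} at the crossover index $t$ to obtain (d). Your treatment of (a) is in fact slightly more explicit than the paper's, since you spell out the pigeonhole argument (using the standing hypothesis of at least $k+1$ positive-weight vertices) that forces $\Theta^{k-1}(P_{1,n})>0$, whereas the paper simply asserts this.
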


\begin{proof}

\ \\

\par \noindent{\bf (a)} $t \ge 0$ because $\Theta^1(P_{0,0})=0$ while $\Theta^{k-1}(P_{i+1,n}) >0.$
Similarly, $t <n$ because $\Theta^{k-1}(P_{n,n}) =0.$

\par \noindent{\bf (b)} If $i=t$ the claim follows from the definition of $t.$  
If $i < t$ then from the definition of $t$ and repeated applications of Claim  \ref{claim:unimodal_k}
$$ \Theta^1(P_{0,i}) \le \Theta^1(P_{0,t})  <  \Theta^{k-1}(P_{t+1,n}) \le \Theta^{k-1}(P_{i+1,n})
$$

\par \noindent {\bf (c)}  From the definition of $t$,
$\Theta^1(P_{0,t+1})  \ge  \Theta^{k-1}(P_{t+2,n})$. If $i >t$ then repeated applications of Claim  \ref{claim:unimodal_k} yield
$$ \Theta^1(P_{0,i}) \ge \Theta^1(P_{0,t})  \ge  \Theta^{k-1}(P_{t+1,n}) \ge \Theta^{k-1}(P_{i+1,n}).$$

\par \noindent  {\bf (d)}  If $i \le t$ then,  from {\bf (b)}, 
$$\max\left(\Theta^1(P_{0,i})\, \Theta^{k-1}(P_{i+1,n})\right)
= \Theta^{k-1}(P_{i+1,n}) > \Theta^{k-1}(P_{t+1,n}).
$$
If $i > t$ then, from {\bf (c)},
$$\max\left(\Theta^1(P_{0,i})\, \Theta^{k-1}(P_{i+1,n})\right)
=\Theta^1(P_{0,i}) \ge \Theta^1(P_{0,t+1}).$$

Plugging this into    Claim \ref {claim:min_k} yields
\begin{eqnarray*}
  \Theta^k(P)   &=& \min_{0 \le i < n} \max\left(\Theta^1(P_{0,i}),\, \Theta^{k-1}(P_{i+1,n})\right)\\
					 &\ge&\min  \left(\Theta^1(P_{0,{t+1}},\, \Theta^{k-1}(P_{t+1,i})  \right).
\end{eqnarray*}
From the other direction applying {\bf (b)}  and {\bf (c)} with $i=t,t+1$ yields
\begin{eqnarray*}
\max\left(\Theta^1(P_{0,t})\, \Theta^{k-1}(P_{t+1,n})\right) &=& \Theta^{k-1}(P_{t+1,n})\\
\max\left(\Theta^1(P_{0,t+1})\, \Theta^{k-1}(P_{t+2,n})\right) &=&\Theta^1(P_{0,t+1})
\end{eqnarray*}
implying
$$ \Theta^k(P)  =  \min  \left(\Theta^1(P_{0,{t+1}}),\, \Theta^{k-1}(P_{t+1,n})  \right),$$
proving  {\bf (d)}.
\end{proof}

\medskip

 Algorithm \ref{alg:OPTk} provides an upper level sketch of how to calculate 
$\Theta^k(P_{j,n})$, for any $j \ge 0.$ Its correctness follows immediately from 
 Theorem \ref {thm:ksink_main}.
 
\begin{algorithm}[t]

\caption{Calculates $\Theta^k(P_{j,n})$ .}
\label{alg:OPTk}
\begin{algorithmic}[1]
\Procedure{OPT}{$P_{j,n},\,k$}
\If  {$k= 1$}
       \State { {\bf Return}$\Bigl(\Theta^1(P_{j,n})\Bigr)$ }   
\Else
       \State {Calculate $t = \max\{i \ge j \,:\, \Theta^1(P_{j,i}) < \Theta^{k-1}(P_{i+1,n})\}$ }  
       	\State {{\bf Return}$\Bigl(\min  \left(\Theta^1(P_{j,{t+1}}),\, \Theta^{k-1}(P_{t+1,n})  \right)\Bigr)$}
  \EndIf         
\EndProcedure
\end{algorithmic}
\end{algorithm}

 To analyze the running time of Algorithm \ref{alg:OPTk}, note that lines 2-3 can be implemented in time $O(|P_{j,n}| \log |P_{j,n}|)$ using Theorem  \ref {thm:1sink}.
 
 \medskip
 
 If $k >1,$ 
for any fixed $i$,  checking whether
$\Theta^{k-1}(P_{i+1,n}) \le  \Theta^1(P_{j,i})$ can be done in
 $O(|P_{j,n}| \log |P_{j,n}|)$ time as follows:
\begin{enumerate}
\item Calculate  $\alpha = \Theta^1(P_{j,i})$\\
       From Theorem  \ref {thm:1sink}, this takes $O(|P_{j,i}| \log |P_{j,i}|)$ time.
\item Check whether $\Theta^{k-1}(P_{i+1,n}) \le \alpha$\\
		From Theorem \ref {thm:ksink_ver} this takes $O(|P_{i+1,n}| \log |P_{i+1,n}|)$ time.
\end{enumerate}

\medskip

Theorem \ref {thm:ksink_main} (b),(c) implies that we can binary search over the indices to find $$t = \max\{i \ge j \,:\, \Theta^1(P_{j,i}) < \Theta^{k-1}(P_{i+1,n})\}.$$
The binary search  tests $O(\log |P_{j,n}|)$ vertices, so the total  time required for finding $t$ in line 5 of the algorithm is 
 $O(|P_{j,n}| \log^2 |P_{j,n}|)$ time.  Note that at the end of the binary search both $\Theta^1(P_{j,t})$ and $\Theta^1(P_{j,t+1})$ (if $t < n$) are known.

Line 9 needs to recursively call $OPT^{k-1}(P_{t+1,n})$. Since at most $k-1$ recursive calls are made, the entire implementation of  Algorithm \ref{alg:OPTk} requires $O(k|P_{j,n}| \log^2 |P_{j,n}|)$ time.

\medskip

As written, Algorithm \ref{alg:OPTk} only returns the {\em value} of $OPT^{k}(P_{j,n})$. Theorem \ref {thm:ksink_main}  also has to provide the associated evacuation protocol, i.e., the partition into subpaths with associated sinks.

Algorithm \ref{alg:OPTk} can be easily modified to provide this protocol.  If $k=1$, there is no partition and Theorem  \ref {thm:1sink}  provides the sink.

If $k>1$ set 
$$\alpha = \Theta^k(P_{j,n}) = \min  \left(\Theta^1(P_{j,{t+1}}),\, \Theta^{k-1}(P_{t+1,n}) \right).$$
There are two cases.

\medskip

$\bf \alpha = \Theta^1(P_{j,{t+1}}):$ Then $\Theta^{k-1}(P_{t+2,n}) \le \alpha$.  An optimal protocol is given by returning the concatenation of the (i) subpath $P_{j,t+1}$ with its associated optimal sink given by Theorem  \ref {thm:1sink} and (ii) the $k-1$ subpaths and associated sinks of $P_{t+2,n}$ given by Theorem \ref{thm:ksink_ver}.  All of these  $k-1$ subpaths evacuate to their sinks in $\le \alpha$ time.

\medskip

 $\bf \alpha = \Theta^{k-1}(P_{t+1,n}):$ Then $\Theta^1(P_{j,{t}}) \le \alpha$. Return the concatenation of the 
(i) subpath $P_{j,t}$ with its associated optimal sink given by Theorem  \ref {thm:1sink} and (ii) the $k-1$ subpaths and associated sinks of  $P_{t+1,n}$ given by the recursive call to $OPT(P_{t+1,n},k-1).$

\medskip

\
\section{Conclusion}

This paper derived an $O(k n \log^2 n)$ algorithm for solving the $k$-sink location problem on a Dynamic Path network with general edge capacities.

The obvious question is whether this can be improved.   \cite{Higashikawa2014b}  gives an $O(kn)$ algorithm for solving the $k$-sink on a path problem {\em for uniform capacity edges}, i.e.,  $\forall i,\ c_i = c$ for some $c >0.$   That algorithm was based on the restriction   to the uniform case of Claims \ref {claim:mono}, \ref {claim:unimodality}, \ref{claim:min_k} and \ref{claim:unimodal_k}. Unlike the algorithms in this paper, which are based on binary search,  their algorithm {\em pointer shifted};  it moved $2k-1$  pointers from  left to right on the path vertices, identifying locations where optimality occurred in subproblems. 
$k-1$ of the pointers denoted the separation between subpaths;  The other $k$ pointers denoted the location of the optimal sinks within each subpath. Judicious application of the claims allowed proving that no pointer ever moved backwards so only $O(nk)$ pointer moves were made in total.

Each pointer move necessitated updating the calculation of either (i) some  $\Theta_L(P_{i,j}, x_j)$ to  $\Theta_L(P_{i+1,j}, x_j)$ or  $\Theta_L(P_{i,j+1}, x_{j+1})$ or (ii) some  $\Theta_R(P_{i,j}, x_i)$ to $\Theta_R(P_{i+1,j}, x_{i+1})$ or $\Theta_R(P_{i,j+1}, x_i).$   \cite{Higashikawa2014b} used a data structure that supported these updates in $O(1)$ amortized time for {\em uniform capacity} edges.
  This led to a $O(nk)$ total time algorithm.

That approach could not be used here because equations
(\ref{eq:left-evac-gc}) and (\ref{eq:right-evac-gc}) do not obviously allow efficient updating of
$\Theta_L(P_{i,j}, x_j)$ and $\Theta_R(P_{i,j}, x_i)$ 
for general capacities.

One approach to improving the algorithms in this paper would be to somehow devise a new data structure allowing updating in $O(f(n))$ time for $f(n) = o(\log^2 n)$. This would immediately lead to a $O(kn f(n))=o(kn f(n))$ algorithm for solving the general problem. This would seem to require a new approach to calculating evacuation times on a path.

\bibliographystyle{plain} 
\bibliography{evac}
\end{document}